\documentclass[twocolumn,superscriptaddress,prl]{revtex4-2}

\usepackage{graphicx}
\usepackage{amsmath}
\usepackage{amsthm}
\usepackage{amssymb}
\usepackage{latexsym}
\usepackage{array}
\usepackage{hyperref}
\usepackage{amsfonts}
\usepackage{dsfont}
\usepackage{mathrsfs}
\usepackage{verbatim}
\usepackage{bbold}
\usepackage[normalem]{ulem}
\usepackage{upgreek}
\usepackage{makecell}
\usepackage{adjustbox}
\usepackage{algorithm}
\usepackage{algpseudocode}
\usepackage{color}
\usepackage{bm}
\usepackage{times}
\usepackage{booktabs}
\usepackage{multirow}

\newcommand{\abs}[1]{\left|#1\right|}
\newcommand{\norm}[1]{\left\|#1\right\|}
\newcommand{\ket}[1]{\left|#1\right\rangle}
\newcommand{\bra}[1]{\left\langle #1\right|}
\newcommand{\braket}[2]{\left\langle #1|#2\right\rangle}
\newcommand{\ketbra}[2]{\ket{#1}\!\!\bra{#2}}

\newcommand{\End}{\operatorname{End}}
\newcommand{\id}{\mathds{1}}
\newcommand{\R}{\mathbb{R}}
\newcommand{\C}{\mathbb{C}}

\newcommand{\cA}{\mathcal{A}}
\newcommand{\cE}{\mathcal{E}}
\newcommand{\SM}{Supplemental Material}
\newcommand{\bal}{\boxtimes}
\newcommand{\phys}{\mathrm{phys}}
\newcommand{\even}{\mathrm{even}}
\newcommand{\odd}{\mathrm{odd}}

\newcommand{\Span}{\operatorname{span}}

\newcommand{\cN}{\mathcal{N}}

\theoremstyle{plain}
\newtheorem{theorem}{Theorem}
\newtheorem{proposition}{Proposition}
\newtheorem{lemma}{Lemma}
\newtheorem{corollary}{Corollary}
\theoremstyle{definition}
\newtheorem{definition}{Definition}
\newtheorem{remark}{Remark}

\makeatletter
\renewcommand\part[1]{%
  \clearpage
  \onecolumngrid
  \section*{#1}
  
}
\makeatother

\makeatletter
\let\origaddcontentsline\addcontentsline
\renewcommand{\addcontentsline}[3]{}
\makeatother

\begin{document}

\part{}

\title{Hidden Complex Structure in Quotient-Space Real Quantum Mechanics}

\author{Jeongho~Bang}\email{jbang@yonsei.ac.kr}
\affiliation{Institute for Convergence Research and Education in Advanced Technology, Yonsei University, Seoul 03722, Republic of Korea}
\affiliation{Department of Quantum Information, Yonsei University, Incheon 21983, Republic of Korea}

\author{Kyoungho~Cho}
\affiliation{Institute for Convergence Research and Education in Advanced Technology, Yonsei University, Seoul 03722, Republic of Korea}
\affiliation{Department of Statistics and Data Science, Yonsei University, Seoul 03722, Republic of Korea}

\author{Kyunghyun~Baek}
\affiliation{Institute for Convergence Research and Education in Advanced Technology, Yonsei University, Seoul 03722, Republic of Korea}
\affiliation{Department of Quantum Information, Yonsei University, Incheon 21983, Republic of Korea}

\date{\today}

\begin{abstract}
Barrios Hita \textit{et al.} [Phys. Rev. Lett. \textbf{136}, 240202 (2026)] argued that quantum mechanics can be formulated over the real numbers by replacing the tensor-product postulate with a quotient-space construction, and concluded that complex numbers are therefore a matter of convenience. We show that the operational content of this construction is not that of a generic real Hilbert-space theory. Empirical equivalence requires a distinguished real linear operator $\hat{J}$ with $\hat{J}^2=-\hat{\id}$, and all physical effects, instruments, and dynamics must preserve the corresponding $SO(2)$ gauge. Moreover, the composite-system rule is a balanced tensor product over this hidden complex structure, not the ordinary tensor product over $\R$. In multipartite network scenarios, this changes the meaning of source independence: canonical real representatives are not source-factorizable in the usual tensor-product sense. Thus, the construction is best understood as standard complex quantum mechanics written in real notation, not as an independent real-amplitude theory. This clarifies what is, and is not, excluded by experiments testing the necessity of complex numbers.
\end{abstract}

\maketitle

{\em Introduction.}---The question whether complex numbers are essential in quantum theory has two logically distinct forms. One may ask whether the symbol $i$ can be removed from the notation by rewriting complex vectors as pairs of real vectors. That representational question has long had affirmative answers in Stueckelberg-type and rebit formulations~\cite{Stueckelberg1960,Myrheim1999,McKague2009,Wootters2012,Aleksandrova2013}. A sharper question is whether the physical structure carried by the complex field---the imaginary unit, the phase gauge, the algebra of observables, and the tensor product over $\C$---can be eliminated rather than hidden~\cite{BangChoYee2026}. Recent Bell-network no-go results and experiments address a particular foil theory: real-amplitude quantum mechanics with the ordinary tensor product over $\R$~\cite{Renou2021,YingEtAl2025,Li2022,Chen2022,Wu2022}. In this theory, the real restriction is operational, not merely notational.

Barrios Hita \textit{et al.}~\cite{BarriosHita2026} proposed a different real formulation. Their single-system map appends a two-dimensional flag representing real and imaginary parts. Their composite rule does not use the ordinary real tensor product as the physical composition; it quotients the tensor product by equivalence relations encoding the freedom to distribute a global complex phase among subsystems. Since this quotient construction is one-to-one with complex quantum mechanics, it reproduces all complex predictions, including the Renou \textit{et al.} Bell-network value~\cite{Renou2021}. The authors therefore conclude that real-valued quantum mechanics cannot be falsified and that complex numbers are useful but unnecessary.

However, we identify the nontrivial physical structures required for its equivalence to complex quantum mechanics. Our conclusion is that the construction removes complex numbers only syntactically. The imaginary unit reappears as a selected real operator $\hat{J}$, the physical algebra is restricted to the commutant of $\hat{J}$, dynamics must respect the same gauge, and composite systems are formed with the balanced tensor product over that hidden complex structure. Consequently, the quotient-space theory is not generic real-amplitude quantum mechanics. It is complex quantum mechanics expressed as a real gauge-redundant representation.

The distinction is not semantic. If one starts with a real Hilbert space and allows all real symmetric effects, then the flag that stores the imaginary part becomes measurable. If one forms composites by the ordinary real tensor product, then the phase of a product state cannot be reassigned between factors in the same way as in complex quantum mechanics. Conversely, if one forbids the flag measurements and quotients the real tensor product by precisely the phase-balancing relations, complex quantum mechanics is recovered. The question is therefore where the complex structure has gone. Our answer is that it has moved from the scalar field into the operational rules.

The Letter is organized around three claims. First, the single-system flag enforces a superselection rule: not every real observable on the enlarged space is physical. Second, the multipartite quotient is a balanced tensor product, hence a real presentation of $\otimes_{\C}$ rather than a new real composition law. Third, in network experiments and open dynamics the same distinction changes the operational assumptions, most visibly the meaning of source independence and the class of admissible real channels.

{\em Single systems: the flag is a superselection sector.}---Let $H$ be a $d$-dimensional complex Hilbert space and let $V=H_{\R}$ denote its realification. The multiplication by $i$ is the real linear map
\begin{eqnarray}
\hat{J}:V \to V, \quad \hat{J}^2=-\hat{\id}.
\label{eq:J_def_main}
\end{eqnarray}
In the flag notation of Ref.~\cite{BarriosHita2026}, a vector $\ket{\psi}\in\C^d$ is represented by
\begin{eqnarray}
S\ket{\psi} = \operatorname{Re}\ket{\psi} \otimes \ket{0}_F+
\operatorname{Im}\ket{\psi}\otimes\ket{1}_F,
\label{eq:S_main}
\end{eqnarray}
so that multiplication by $i$ is implemented by the real antisymmetric flag operator
\begin{eqnarray}
\hat{J}_F = \begin{pmatrix} 0 & -1 \\ 1 & 0 \end{pmatrix},
\quad
e^{i\alpha} \longleftrightarrow e^{\alpha \hat{J}_F} \in SO(2).
\label{eq:JF_main}
\end{eqnarray}
Thus, $\ket{\psi}$ and $i\ket{\psi}$, although physically identical in complex quantum mechanics, may be orthogonal real vectors after the map $S$. This is not a harmless redundancy unless the allowed measurements are correspondingly restricted.

\begin{theorem}[$\hat{J}$-superselection]
\label{thm:J_super_main}
Let $V$ be a real Hilbert space equipped with a complex structure $\hat{J}^2 = -\hat{\id}$, with $\hat{J}^T = -\hat{J}$. Suppose that all vectors on the orbit $v \sim e^{\alpha \hat{J}}v$ are required to be operationally indistinguishable. Then, every real symmetric physical effect $\hat{E}$ must satisfy
\begin{eqnarray}
[\hat{E}, \hat{J}]=0.
\label{eq:commutant_main}
\end{eqnarray}
Conversely, effects satisfying Eq.~(\ref{eq:commutant_main}) are invariant on every $SO(2)$ gauge orbit.
\end{theorem}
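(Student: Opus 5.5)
The plan is to turn operational indistinguishability into an identity of quadratic forms and then differentiate along the gauge orbit. An effect $\hat{E}$ assigns to a normalized real vector $v$ the probability $p_E(v)=v^T\hat{E}v$. Saying that $v$ and $e^{\alpha\hat{J}}v$ cannot be distinguished means $p_E(e^{\alpha\hat{J}}v)=p_E(v)$ for every physical effect, every $v$, and every $\alpha$.

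First I will check that the gauge is orthogonal. Since $\hat{J}^T=-\hat{J}$, we have $(e^{\alpha\hat{J}})^T=e^{-\alpha\hat{J}}$, so $U_\alpha:=e^{\alpha\hat{J}}$ is orthogonal. Combined with $\hat{J}^2=-\hat{\id}$ this gives $U_\alpha=\cos\alpha\,\hat{\id}+\sin\alpha\,\hat{J}$, which is an $SO(2)$ action. Normalization is therefore preserved along each orbit. The invariance condition becomes
\begin{eqnarray}
v^T\left(U_\alpha^T\hat{E}U_\alpha-\hat{E}\right)v=0 \quad \forall v,\alpha .
\end{eqnarray}
Next I use polarization. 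The operator $U_\alpha^T\hat{E}U_\alpha-\hat{E}$ is symmetric, and a real symmetric operator whose quadratic form vanishes on all (unit) vectors is zero. This step is where the symmetry hypothesis on $\hat{E}$ is essential. It follows that $U_\alpha^T\hat{E}U_\alpha=\hat{E}$, or equivalently $\hat{E}U_\alpha=U_\alpha\hat{E}$, for all $\alpha$.

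Then I take the derivative at $\alpha=0$. Alternatively, I can insert $\alpha=\pi/2$, where $U_{\pi/2}=\hat{J}$ by the formula above. Either way the result is $[\hat{E},\hat{J}]=0$. A direct alternative is to note that $U_{\pi/2}^T\hat{E}U_{\pi/2}=-\hat{J}\hat{E}\hat{J}=\hat{E}$, and multiplying on the left by $\hat{J}$ gives $\hat{E}\hat{J}=\hat{J}\hat{E}$.

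For the converse, suppose $[\hat{E},\hat{J}]=0$. Then $\hat{E}$ commutes with every power series in $\hat{J}$, so it commutes with $U_\alpha$. Hence $U_\alpha^T\hat{E}U_\alpha=U_\alpha^T U_\alpha\hat{E}=\hat{E}$, and $p_E$ is constant on each orbit.

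The main obstacle is interpretive rather than computational. The hypothesis must be read as invariance of \emph{every} physical effect under the gauge, for \emph{all} vectors $v$. If invariance were required only for some states, the polarization step would fail. Moreover, the quadratic form determines only the symmetric part of an operator, which is why I must insist that $\hat{E}$ be symmetric. I will state this reading explicitly at the start of the proof. For multi-outcome instruments, the argument then applies effect by effect.
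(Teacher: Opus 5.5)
Your proof is correct and matches the paper's argument: you turn gauge invariance into the quadratic-form identity $v^T(e^{-\alpha\hat{J}}\hat{E}e^{\alpha\hat{J}}-\hat{E})v=0$, use that a real symmetric form vanishing on all vectors is zero, differentiate at $\alpha=0$ to get $[\hat{E},\hat{J}]=0$, and obtain the converse by commuting $\hat{E}$ through the exponential. Your alternative of setting $\alpha=\pi/2$ uses $e^{(\pi/2)\hat{J}}=\hat{J}$ to read off the commutator with no differentiation, but the structure of the proof is otherwise identical.
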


For an effect $\hat{E}$, the invariance of probabilities is the condition $v^T \hat{E} v=(e^{\alpha \hat{J}}v)^T \hat{E}(e^{\alpha \hat{J}}v)$ for all $v$ and $\alpha$. Equivalently, $e^{-\alpha \hat{J}}\hat{E}e^{\alpha \hat{J}}=\hat{E}$. Differentiating at $\alpha=0$ gives $[\hat{E},\hat{J}]=0$, and the converse follows by exponentiation. Hence, a real effect that does not commute with $\hat{J}$ would make the phase gauge observable. In the flag model, a flag-only measurement such as $\hat{\id} \otimes \ketbra{0}{0}_F$ is therefore forbidden. The allowed representatives of complex effects are precisely of the form
\begin{eqnarray}
T(\hat{A}) = \operatorname{Re}(\hat{A}) \otimes \hat{I}_F + \operatorname{Im}(\hat{A}) \otimes \hat{J}_F,
\label{eq:T_main}
\end{eqnarray}
which is the real commutant of the chosen complex structure. This is a superselection rule, not a consequence of using real numbers alone.

The same observation applies beyond projective measurements. If a real operational theory treats the flag as an ordinary degree of freedom, then generic flag couplings, flag dephasing, or flag readout are allowed and standard complex quantum mechanics is not recovered. Empirical equivalence requires that all physical effects, instruments, and transformations preserve the $SO(2)$ gauge generated by $\hat{J}$. The flag is therefore not an accessible subsystem; it is a gauge sector carrying the missing imaginary unit.

This conclusion is stronger than the familiar observation that a complex vector can be stored as two real vectors. It says that the real theory must choose a particular operator $\hat{J}$ and make it unobservable as a gauge generator. Without such a choice, $\R^{2d}$ is only a real vector space with many possible complex structures and many real measurements that have no complex counterpart. With such a choice, the physical algebra is exactly the algebra of complex-linear Hermitian operators written in real blocks. Thus, the single-system theory is not ``real Hilbert space plus Born rule''; it is real Hilbert space plus a protected complex structure.

The superselection rule also shows why the word ``flag'' can be misleading. A flag in ordinary quantum information is a system one may condition on, couple to an apparatus, or trace out after a measurement. The present flag cannot be used in that way. It is more analogous to a gauge coordinate: different flag angles represent the same physical ray, and the theory is empirically equivalent to complex quantum mechanics only when no allowed operation can reveal that angle. The statement in Ref.~\cite{BarriosHita2026} that the flag is not directly accessible is therefore not an innocuous technical remark; it is one of the central physical postulates of the construction.

Equivalently, the observable algebra is not $\mathrm{Sym}(2d, \R)$ but the much smaller algebra $\mathrm{Sym}(2d,\R)\cap\{\hat{J}\}'$. This restriction is experimentally invisible only because it is imposed at the outset. If it is relaxed even slightly, the theory predicts measurements that distinguish different representatives of the same complex ray. Thus, the real formalism has two layers: the kinematic layer of real vectors and the physical layer selected by $\hat{J}$. The latter is where complex quantum mechanics is encoded.

{\em Composite systems: the quotient is a balanced tensor product.}---The central step of Ref.~\cite{BarriosHita2026} is the replacement of the ordinary tensor product over $\R$ by a quotient space. This can be stated invariantly. Let $(V_A, \hat{J}_A)$ and $(V_B, \hat{J}_B)$ be real Hilbert spaces equipped with complex structures. Define
\begin{eqnarray}
V_A \bal V_B &=& (V_A \otimes_{\R} V_B)/\mathcal N_{AB}, \nonumber \\
\mathcal N_{AB} &=& \operatorname{span}_{\R} \{\hat{J}_A v\otimes w-v\otimes \hat{J}_B w\}.
\label{eq:balanced_main}
\end{eqnarray}
The relation in Eq.~(\ref{eq:balanced_main}) is the balanced relation for scalar multiplication by $i$: applying $i$ to the first factor is identified with applying $i$ to the second factor. It immediately implies $\hat{J}_A v \otimes \hat{J}_B w \sim -v \otimes w$, exactly the real form of phase redistribution.

\begin{theorem}[Quotient composition is complex composition]
\label{thm:balanced_main}
Let $H_A$ and $H_B$ be the complex Hilbert spaces whose realifications are $(V_A, \hat{J}_A)$ and $(V_B, \hat{J}_B)$. Then,
\begin{eqnarray}
V_A \bal V_B \simeq (H_A \otimes_{\C} H_B)_{\R}
\label{eq:balanced_iso_main}
\end{eqnarray}
canonically as real Hilbert spaces with complex structure.
\end{theorem}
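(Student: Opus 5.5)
Our plan is to build the isomorphism from the universal property of the algebraic balanced tensor product, and then to check that it carries the complex structure and the inner product across.

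First, we regard $V_A$ and $V_B$ as complex vector spaces $H_A$ and $H_B$ by letting $i$ act as $\hat{J}_A$ and $\hat{J}_B$. The real bilinear map $\beta:V_A\times V_B\to (H_A\otimes_{\C}H_B)_{\R}$, $\beta(v,w)=v\otimes_{\C}w$, satisfies
\begin{eqnarray}
\beta(\hat{J}_A v,w)=(iv)\otimes_{\C}w=v\otimes_{\C}(iw)=\beta(v,\hat{J}_B w).
\end{eqnarray}
It therefore induces a real linear map $\tilde\beta:V_A\otimes_{\R}V_B\to(H_A\otimes_{\C}H_B)_{\R}$ that kills $\mathcal N_{AB}$. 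This gives a well-defined map $\Phi:V_A\bal V_B\to(H_A\otimes_{\C}H_B)_{\R}$. The map $\Phi$ is surjective, because the image contains all simple tensors and these span.

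Second, we show that $\Phi$ is injective. We will use the standard fact that $H_A\otimes_{\C}H_B$ is the quotient of $H_A\otimes_{\R}H_B$ by the $\R$-span of the relations $(\lambda v)\otimes w-v\otimes(\lambda w)$ for $\lambda\in\C$. Because these relations are $\R$-linear in $\lambda$ and trivial for $\lambda=1$, only $\lambda=i$ contributes, and the $\lambda=i$ relations generate exactly $\mathcal N_{AB}$. This identifies the two quotients.

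As an independent check, we also plan a dimension count. If $\dim_{\C}H_A=d_A$ and $\dim_{\C}H_B=d_B$, then $\dim_{\R}(V_A\otimes_{\R}V_B)=4d_Ad_B$. The operator $K=\hat{J}_A\otimes\hat{J}_B$ is an involution, since $K^2=\hat{\id}$. Take $P_\pm$ to be its $\pm1$ eigenspaces, each of dimension $2d_Ad_B$. Conjugating the relation vector by $\hat{J}_A\otimes\hat{\id}$ shows that $\mathcal N_{AB}$ is the range of $\hat{J}_A\otimes\hat{\id}-\hat{\id}\otimes\hat{J}_B$. On $P_+$ this operator is zero, and on $P_-$ it acts as $2\hat{J}_A\otimes\hat{\id}$, which is invertible. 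Hence $\mathcal N_{AB}=P_-$, and the quotient has real dimension $2d_Ad_B=\dim_{\R}(H_A\otimes_{\C}H_B)_{\R}$. Surjectivity then gives bijectivity.

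Third, we check the structures. The quotient carries the complex structure $\hat{J}_{AB}=[\hat{J}_A\otimes\hat{\id}]=[\hat{\id}\otimes\hat{J}_B]$, which is well defined because $\hat{J}_A\otimes\hat{\id}$ preserves $\mathcal N_{AB}$. The map $\Phi$ intertwines $\hat{J}_{AB}$ with multiplication by $i$.

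For the metric, we identify the quotient with the orthogonal complement $P_+=\mathcal N_{AB}^\perp$. This gives $V_A\bal V_B$ the inner product $\langle[x],[y]\rangle=x_+^T y_+$, where $x_+$ is the orthogonal projection of $x$ onto $P_+$. We then compare this with $\operatorname{Re}\langle\cdot,\cdot\rangle_{\C}$ on simple tensors. Write $\langle v,v'\rangle_{\C}=v^Tv'+i\,v^T\hat{J}v'$, up to a sign convention. Using $P_+=\tfrac12(\hat{\id}+\hat{J}_A\otimes\hat{J}_B)$, we get
\begin{eqnarray}
(v\otimes w)^TP_+(v'\otimes w')=\tfrac12\operatorname{Re}\left(\langle v,v'\rangle\langle w,w'\rangle\right).
\end{eqnarray}
The ordinary real tensor inner product must therefore be rescaled by $2$ on $P_+$, or equivalently $\Phi$ must be rescaled by $1/\sqrt2$, to obtain an isometry.

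We expect this normalization and the sign bookkeeping in the inner-product identity to be the main obstacle. It is also where "canonically" needs care. We will fix the convention so that the isomorphism sends $[v\otimes w]\mapsto\sqrt2\,v\otimes_{\C}w$, or else define the quotient inner product with that factor built in, and note that this choice is canonical since it involves no choice of basis.
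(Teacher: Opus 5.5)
Your argument for the isomorphism itself is correct, and it differs from the paper's only in how injectivity is obtained. The paper descends $(v,w)\mapsto v\otimes_{\C}w$ to the quotient exactly as you do and gets surjectivity from simple tensors. It then concludes by counting: the quotient is spanned by the $2d_Ad_B$ classes $e_a\bal f_b$ and $\hat{J}_Ae_a\bal f_b$, so a surjection onto a space of real dimension $2d_Ad_B$ must be bijective. You instead use the standard presentation of $H_A\otimes_{\C}H_B$ as a quotient of $H_A\otimes_{\R}H_B$, in which only $\lambda=i$ contributes. This identifies the two quotients directly, with no bases and no finite-dimensionality, so your eigenspace count becomes a redundant check. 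You also go beyond the paper's proof, which only establishes a real-linear isomorphism and never checks compatibility with $\hat{J}$ or with the inner products, even though the statement asserts ``as real Hilbert spaces with complex structure.'' Your treatment of $\hat{J}_{AB}$ is fine. The metric step is where things go wrong.

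You have the eigenspaces swapped. Let $K=\hat{J}_A\otimes\hat{J}_B$ and $D=\hat{J}_A\otimes\hat{\id}-\hat{\id}\otimes\hat{J}_B$. Since $(\hat{J}_A\otimes\hat{\id})K=-\hat{\id}\otimes\hat{J}_B$, we get $D=(\hat{J}_A\otimes\hat{\id})(\hat{\id}+K)$. So $D$ vanishes on the $-1$ eigenspace and equals $2\hat{J}_A\otimes\hat{\id}$ on the $+1$ eigenspace. Hence $\cN_{AB}=\operatorname{ran}D=\ker(K-\hat{\id})$, not $\ker(K+\hat{\id})$.

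This agrees with the paper. The kernel vectors $\ket{00}+\ket{11}$ and $\ket{01}-\ket{10}$ are $+1$ eigenvectors of $\hat{J}_F\otimes\hat{J}_F$. Also, the relation $\hat{J}_Av\bal\hat{J}_Bw=-v\bal w$ says that $K$ acts as $-1$ on the quotient.

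Your dimension count and the well-definedness of $\hat{J}_{AB}$ survive the swap. Both eigenspaces have dimension $2d_Ad_B$ because $\operatorname{tr}K=0$, and $\hat{J}_A\otimes\hat{\id}$ commutes with $K$.

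The metric step does not survive. Your $P_+$ is $\cN_{AB}$ itself, so identifying the quotient with it is vacuous. Your displayed identity is also false: with the projector $\tfrac12(\hat{\id}+K)$ the cross term enters with a plus sign, giving $\tfrac12\operatorname{Re}(\langle v,v'\rangle\overline{\langle w,w'\rangle})$. That is the inner product of $H_A\otimes_{\C}\overline{H_B}$, not of $H_A\otimes_{\C}H_B$.

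The correct model of the quotient is $\cN_{AB}^{\perp}=\ker(K+\hat{\id})$, with projector $\tfrac12(\hat{\id}-K)$. For either sign convention of the imaginary part, $\operatorname{Re}(\langle v,v'\rangle\langle w,w'\rangle)=(v^Tv')(w^Tw')-(v^T\hat{J}_Av')(w^T\hat{J}_Bw')$. This gives exactly $(v\otimes w)^T\tfrac12(\hat{\id}-K)(v'\otimes w')=\tfrac12\operatorname{Re}(\langle v,v'\rangle\langle w,w'\rangle)$.

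It follows that $\norm{[v\otimes w]}^2=\tfrac12\norm{v}^2\norm{w}^2$. The isometry is therefore $[v\otimes w]\mapsto v\otimes_{\C}w/\sqrt2$, as you correctly said when you proposed rescaling $\Phi$ by $1/\sqrt2$. The map $[v\otimes w]\mapsto\sqrt2\,v\otimes_{\C}w$ in your closing paragraph inverts that factor.
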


The isomorphism sends $v \bal w$, the equivalence class of $v\otimes w$, to $v \otimes_{\C}w$. It is well defined exactly because $\hat{J}_A v \otimes w$ and $v \otimes \hat{J}_B w$ represent the same complex tensor. It is surjective by construction and has the correct real dimension, $2 d_A d_B$. Thus, the quotient is not an alternative composition over the real field. It is the complex tensor product written as a real balanced tensor product~\footnote{Further details, including a coordinate-free proof and the equivalence with the flag-kernel quotient, are given in Sec.~IV of the \SM.}.

In the notation of Ref.~\cite{BarriosHita2026}, this is exactly the statement that the kernel of $(S^{-1})^{\otimes2}$ is generated, on the two local flags, by $\ket{00}_F + \ket{11}_F$ and $\ket{01}_F - \ket{10}_F$. These are not arbitrary identifications. They are the real-coordinate form of $(i v) \otimes w = v \otimes (i w)$ and $(iv) \otimes (iw) = -v \otimes w$. Hence, the quotient encodes the rule that the factor $i$ may be moved from one subsystem to another, which is precisely the scalar-balancing rule of the tensor product over $\C$.

This also clarifies the dimension count. The ordinary tensor product $V_A \otimes_{\R} V_B$ has real dimension $4 d_A d_B$, twice the real dimension of the complex composite. The quotient removes exactly the redundant half, leaving $2 d_A d_B$. The removal is not an empirical consequence of locality; it is the imposition of the complex scalar relation in real language. Once that relation is imposed, the resulting theory must agree with complex quantum mechanics, because the composite state space is isomorphic to the realification of the complex composite state space.

This point is decisive for the interpretation of Bell-network tests. The no-go theorems constrain theories in which independent real systems compose with $\otimes_{\R}$. The quotient construction evades those theorems by changing the composition rule to $\bal$. That change may be mathematically consistent, but it is an additional composition postulate. It is not obtained from real Hilbert spaces alone.

The relation also explains why the naive map $S^{\otimes2}$ fails. In complex quantum mechanics, $\psi_A\otimes\psi_B$ and $(i\psi_A)\otimes(-i\psi_B)$ are the same vector. Applying the single-system real map to each factor separately sends them to different real flag vectors unless the quotient identifies those flag vectors. The quotient therefore repairs the map only by imposing the complex balancing relation that the ordinary real tensor product lacks.

{\em The locality postulate is underdetermined.}---Ref.~\cite{BarriosHita2026} motivates its construction by replacing the tensor-product postulate with a locality condition, (P4): operations on one subsystem act trivially on the others, and local operators commute. In formal terms, however, such a condition is imposed only after one has specified an independent-preparation embedding $\xi_F : H_1 \times H_2 \to H$ and local-operation embeddings $\eta_F^i : \End(H_i) \to \End(H)$. (P4) constrains these embeddings; it does not determine them.

Indeed, the ordinary tensor product satisfies (P4) with $\xi(v,w)=v\otimes w$ and $\eta^1(\hat{A})=\hat{A}\otimes\hat{\id}$, $\eta^2(\hat{B})=\hat{\id}\otimes\hat{B}$. The quotient construction also satisfies (P4), but with different embeddings, namely $\xi_R(v,w) = v \bal w$ and $\eta_R^i$ defined through the quotient. Therefore,
\begin{eqnarray}
\mathrm{(P4)} + \mathrm{real~Hilbert~spaces} \not\Rightarrow \bal.
\label{eq:P4_underdet_main}
\end{eqnarray}
The physical content has merely been shifted: instead of postulating $\otimes_{\R}$, one postulates the quotient embedding $\bal$. Since $\bal$ is isomorphic to $\otimes_{\C}$, the recovery of complex predictions follows from the chosen embedding, not from locality alone.

One can phrase the issue as an underdetermination theorem. A condition of the form ``local operations act trivially on remote systems'' is a compatibility condition between a state-composition map and an operator-embedding map. It is satisfied by the standard complex tensor product, by the ordinary real tensor product, and by the balanced quotient tensor product, provided the corresponding embeddings are chosen. Therefore, (P4) cannot by itself select among these theories. The nontrivial physical choice is not (P4) alone, but the selection of the quotient embeddings that make the diagrams with the complex theory commute.

This matters because the conclusion of Ref.~\cite{BarriosHita2026} is sometimes phrased as if a more physical locality postulate replaces the mathematical tensor-product postulate. The formal situation is different. The ordinary tensor product has been replaced by another mathematical composition rule, $\bal$, whose physical meaning is exactly the permission to move the hidden $\hat{J}$ factor between subsystems. That rule is natural if one already wants the real theory to be a realification of complex quantum mechanics. It is not forced by local commutativity.

This is the precise point at which a possible falsification claim changes target. The experiments that rule out the ordinary real tensor product do not rule out a quotient theory isomorphic to complex quantum mechanics, but neither does the quotient theory show that those experiments were testing an irrelevant assumption. They were testing a well-defined alternative operational theory. The quotient theory is a different alternative, with a different composition postulate.

{\em Bell networks and source independence.}---The distinction is operationally visible in the entanglement-swapping network of Renou \textit{et al.}~\cite{Renou2021}. In complex quantum mechanics, two independent sources prepare
\begin{eqnarray}
\hat{\rho}_{AB_1} \otimes_{\C} \hat{\rho}_{B_2C}.
\label{eq:complex_sources_main}
\end{eqnarray}
In the quotient real formulation, the corresponding expression is
\begin{eqnarray}
T_1(\hat{\rho}_{AB_1})\bal T_1(\hat{\rho}_{B_2C}),
\label{eq:real_sources_main}
\end{eqnarray}
where $T_1$ is the trace-normalized real map for density operators. If one writes a canonical representative inside the ambient tensor product of local real flags, the two Bell sources are accompanied not by independent source-local flags but by a global parity flag. For the two Bell-pair state one obtains, in the notation of Ref.~\cite{BarriosHita2026},
\begin{eqnarray}
\ket{\phi^+}_{AB_1}\ket{\phi^+}_{B_2C} \otimes \ket{\psi^{(4)}_{\even}}_{A' B_1' B_2' C'}.
\label{eq:global_flag_main}
\end{eqnarray}
The four-party flag state is an even-parity canonical representative; it is not factorized according to the two independent sources in the ordinary tensor-product sense~\footnote{The general $N$-party canonical even/odd flag states used here are defined in Sec.~V of the \SM.}.

The quotient theory reconciles Eq.~(\ref{eq:real_sources_main}) with Eq.~(\ref{eq:global_flag_main}) by declaring the relevant factorization to be quotient factorization. But this is precisely the change of assumption. If the flags are physical local ancillas, the canonical representative contains a global parity constraint across the network. If the flags are gauge variables, the construction is a real encoding of the complex theory. In neither case is it the real-amplitude tensor-product theory excluded by Bell-network experiments. 

For the Renou network, this distinction is not merely formal. The complex construction uses two Bell-state sources, Bob's Bell measurement, and Alice--Charlie measurements involving $\hat{X}, \hat{Y}, \hat{Z}$ directions; it gives the value $T = 6\sqrt2 \simeq 8.4853$. The real-amplitude tensor-product model has the lower bound $T \le 7.6605$. The quotient real model reaches the complex value because all states and measurements are mapped through $T$ and composed through $\bal$. Thus, the numerical violation is not a new real-amplitude resource. It is the complex resource transported through the quotient isomorphism.

The source-independence assumption is therefore doing different work in the two comparisons. In the real-amplitude no-go theorem, the independence means ordinary tensor-product factorization of the two source states. In the quotient theory, the independence means factorization under $\bal$. When one chooses canonical representatives in the ambient flag tensor product, $\bal$-factorized states can appear nonfactorized. The global even-parity flag in Eq.~(\ref{eq:global_flag_main}) is the concrete witness of this change.

The experiments remain still meaningful: they rule out ordinary real-amplitude quantum mechanics as a foil theory, not a gauge-redundant real rewriting of complex quantum mechanics.

{\em Dynamics, open systems, and overhead.}---The same superselection requirement applies to open dynamics. Let $\mathcal A_{\hat{J}} = \{ \hat{A}:[\hat{A},\hat{J}]=0 \}$ be the physical algebra. A generic real completely positive trace-preserving map on the enlarged real space need not preserve $\mathcal{A}_{\hat{J}}$ and need not map the gauge-equivalent states to the gauge-equivalent states. A sufficient and natural condition is $\hat{J}$-covariance,
\begin{eqnarray}
\cE(e^{\alpha \hat{J}}\hat{\rho} e^{-\alpha \hat{J}}) = e^{\alpha \hat{J}}\cE(\hat{\rho})e^{-\alpha \hat{J}},
\label{eq:J_cov_main}
\end{eqnarray}
or, in the Heisenberg picture,
\begin{eqnarray}
\cE^\dagger(\mathcal A_{\hat{J}})\subseteq \mathcal A_{\hat{J}}.
\label{eq:J_heis_main}
\end{eqnarray}
The realifications of complex channels satisfy this condition, but the generic real environment couplings do not. Thus, the price of equivalence is not merely the addition of a flag in kinematics; it is a protection rule for all admissible noise, measurements, and instruments.

It would be misleading, however, to claim that the abstract quotient state space has an exponential dimension overhead. For $N$ $d$-level parties, the real dimension of $(\C^d)^{\otimes N}$ is $2d^N$, and the quotient has the same dimension. The overhead appears only if the canonical representatives are interpreted as physical vectors in the ambient local-flag space $(\R^2)^{\otimes N}$: the even and odd flag states occupy a $2^N$-dimensional ambient flag space and impose global parity structure. The theory must therefore choose between a quotient/gauge interpretation with no extra physical flags, and a literal local-ancilla interpretation with nonlocal flag constraints.

This observation is especially relevant for open quantum systems. A real heat bath or measurement device that can couple to the flag basis would implement a perfectly ordinary real channel, but it would not correspond to any complex channel. Thus, the quotient theory must assume that the $\hat{J}$-generated gauge  is protected against all physical interactions. In finite dimension this assumption is enough to realify complex CPTP maps without adding an infinite reservoir of auxiliary degrees of freedom. The price is instead a restriction on the allowed real channels: they must be those that descend to the quotient or, more strongly, those covariant under the $SO(2)$ gauge.

This also answers a natural concern about the resources. The decisive issue is not that a finite-dimensional quotient formulation necessarily needs exponentially many independent physical degrees of freedom. Abstractly it does not. The decisive issue is that the equivalence class, or equivalently, the allowed-operation rule, must be maintained for arbitrarily complicated dynamics. For closed systems, this means the orthogonal evolutions commuting with the hidden complex structure. For open systems, it means the channels that respect the same gauge. The generic real dynamics would be a different theory.

{\em Discussion.}---The quotient-space formulation of Ref.~\cite{BarriosHita2026} is a consistent real representation of complex quantum mechanics. What it does not show is that the complex structure has been physically eliminated. The empirical equivalence relies on three ingredients that are absent from generic real-amplitude quantum mechanics: a distinguished complex structure $\hat{J}$, a superselection rule selecting the $\hat{J}$-commutant as the physical algebra, and a balanced tensor product over $\hat{J}$ for composite systems. These ingredients are exactly what make a complex Hilbert-space theory complex in operational terms.

The status of the uniqueness claims in the target construction should be read accordingly. Under the assumptions that already encode the phase group as an $SO(2)$ representation and require operators to respect that representation, the real flag map and its quotient composition are essentially unique. This is a useful representation theorem. However, it does not show that the complex structure is unnecessary in the physical theory; it shows that once the complex phase structure is reintroduced as gauge data, the real representation is fixed.

This distinction separates two claims that are often conflated. The representational claim is true: complex amplitudes can be encoded by real vectors. The stronger physical claim is not established: the encoded theory still contains the imaginary unit as $\hat{J}$, still restricts observables and dynamics by a phase-gauge rule, and still composes systems by the complex balanced tensor product. Thus, the construction is best understood as standard complex quantum mechanics in real notation, not as an independent real-number theory. A compact way to state the result is therefore: ``real coordinates'' are eliminable, but ``complex structure'' is not eliminated by the quotient construction. The former is a statement about notation; the latter is a statement about the algebra of physical operations and the composition of independent systems.

The complex numbers can be hidden in real notation, but not eliminated from the physical structure. Equivalently, the imaginary unit can be removed from the notation only by reinstating it as a real operator $\hat{J}$, restricting the physical algebra to its commutant, and composing systems with the corresponding balanced tensor product.

{\em Note.}---The \SM~gives full proofs, multipartite flag formulas, and the comparison with earlier real-Hilbert-space formulations. A detailed comparison with recent and earlier works is also given in Sec.~IX of the \SM~(also, refer to Refs.~\cite{YingEtAl2025,HoffreumonWoods2025,HoffreumonWoods2026,KalardeXuRenou2026,FengRenVedral2025,WeilenmannGisinSekatski2025,Volovich2025,MaioliCuradoGazeau2026,MorettiOppio2017}).

\begin{acknowledgments}
{Acknowledgments.}---This work was supported by the Ministry of Science, ICT and Future Planning (MSIP) by the National Research Foundation of Korea (RS-2024-00432214); the Institute of Information and Communications Technology Planning and Evaluation grant funded by the Korean government (RS-2019-II190003, ``Research and Development of Core Technologies for Programming, Running, Implementing and Validating of Fault-Tolerant Quantum Computing System''); the Ministry of Trade, Industry and Resources (MOTIR), Korea, under the project ``Industrial Technology Infrastructure Program'' (RS-2024-00466693); the Korean ARPA-H Project through the Korea Health Industry Development Institute (KHIDI), funded by the Ministry of Health \& Welfare, Republic of Korea (RS-2025-25456722). We acknowledge the Yonsei University Quantum Computing Project Group for providing support and access to the Quantum System One (Eagle Processor), which is operated at Yonsei University.
\end{acknowledgments}

%


\clearpage
\newpage
\onecolumngrid

\makeatletter
\let\addcontentsline\origaddcontentsline
\makeatother

\part{Supplemental Material for ``Hidden Complex Structure in Quotient-Space Real Quantum Mechanics''}


\section{Purpose and notation}

This Supplemental Material gives the detailed mathematical statements behind the Letter. The main point is not that the quotient-space construction of Ref.~\cite{SM-BarriosHita2026} is inconsistent. On the contrary, precisely because it is naturally isomorphic to standard complex quantum mechanics, it is internally consistent in finite dimension. The point is that the equivalence requires additional structure that is not present in a generic real-amplitude Hilbert-space theory: a distinguished complex structure $\hat{J}$, a superselection rule selecting the $\hat{J}$-commutant as the physical algebra, and a balanced tensor product over that structure.

We use the following notation. If $H$ is a complex Hilbert space, $H_{\R}$ denotes the same additive group regarded as a real vector space. Multiplication by $i$ is denoted by $\hat{J}$; thus $\hat{J}^2=-\hat{\id}$. We write $\End_{\R}(V)$ for real linear maps on a real vector space $V$, and $\End_{\C}(H)$ for complex linear maps. The symbol ``$\bal$'' denotes the balanced tensor product introduced below. It is the invariant version of the flag quotient tensor product denoted ``$\otimes_F$'' in Ref.~\cite{SM-BarriosHita2026}.


\section{Realification and complex structures}\label{sec:realification}

Let $H\simeq\C^d$ be a finite-dimensional complex Hilbert space with inner product linear in the second argument. The realification $V=H_{\R}$ is the $2d$-dimensional real vector space obtained by restricting scalar multiplication from $\C$ to $\R$. In a basis, a vector $\psi=x+iy$ is identified with $(x,y)\in\R^d\oplus\R^d$. Multiplication by $i$ becomes
\begin{eqnarray}
\hat{J}(x,y)=(-y,x),
\quad
\hat{J}=\begin{pmatrix}0&-\hat{\id}_d\\ \hat{\id}_d&0\end{pmatrix},
\quad
\hat{J}^2=-\hat{\id}_{2d}.
\label{eq:sm_J_block}
\end{eqnarray}
The real inner product is
\begin{eqnarray}
(v,w)_{\R}=\operatorname{Re}\braket{v}{w}_{\C},
\label{eq:sm_real_inner}
\end{eqnarray}
and $\hat{J}$ is orthogonal and antisymmetric, $\hat{J}^T\hat{J}=\hat{\id}$ and $\hat{J}^T=-\hat{J}$.

The flag representation used in Ref.~\cite{SM-BarriosHita2026} is the same construction written as
\begin{eqnarray}
S:\C^d\to\R^d\otimes\R^2_F,
\quad
S\ket{\psi}=\operatorname{Re}\ket{\psi}\otimes\ket{0}_F+
\operatorname{Im}\ket{\psi}\otimes\ket{1}_F.
\label{eq:sm_S_def}
\end{eqnarray}
In this notation,
\begin{eqnarray}
\hat{J}_F\ket{0}_F=\ket{1}_F,
\quad
\hat{J}_F\ket{1}_F=-\ket{0}_F,
\quad
\hat{J}_F=\begin{pmatrix}0&-1\\1&0\end{pmatrix}.
\label{eq:sm_JF}
\end{eqnarray}
Thus, the $U(1)$ phase orbit of a complex ray becomes an $SO(2)$ orbit in the real flag plane,
\begin{eqnarray}
S(e^{i\alpha}\psi)=e^{\alpha \hat{J}}S(\psi).
\label{eq:sm_phase_orbit}
\end{eqnarray}

\begin{proposition}[Complex-linear maps are the $\hat{J}$-commutant]\label{prop:commutant}
Let $V=H_{\R}$. A real linear map $A_R\in\End_{\R}(V)$ is the realification of a complex-linear map $A\in\End_{\C}(H)$ if and only if
\begin{eqnarray}
[A_R,\hat{J}]=0.
\label{eq:sm_commutant_end}
\end{eqnarray}
In the block form $V\simeq\R^d\oplus\R^d$, such maps have the form
\begin{eqnarray}
A_R=\begin{pmatrix}X&-Y\\Y&X\end{pmatrix},
\label{eq:sm_complex_block}
\end{eqnarray}
which represents the complex matrix $A=X+iY$.
\end{proposition}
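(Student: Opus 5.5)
The plan is to prove the two directions separately, working directly from the definition of realification. Then I will read off the block form in the coordinates $V\simeq\R^d\oplus\R^d$ of Eq.~(\ref{eq:sm_J_block}).

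\emph{Forward direction.} Suppose $A\in\End_{\C}(H)$ and let $A_R$ be the same map on the underlying additive group $H_{\R}$. Real linearity of $A_R$ is automatic, since restricting scalars from $\C$ to $\R$ keeps additivity and $\R$-homogeneity. Complex linearity gives $A(i\psi)=iA(\psi)$ for every $\psi$. By the definition of $\hat{J}$ as multiplication by $i$, this identity reads $A_R\hat{J}=\hat{J}A_R$, which is $[A_R,\hat{J}]=0$.

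\emph{Converse.} Suppose $A_R\in\End_{\R}(V)$ commutes with $\hat{J}$. Define $A:H\to H$ as the same set map. It is additive and $\R$-homogeneous. For a general scalar $a+ib$ with $a,b\in\R$,
\begin{eqnarray}
A((a+ib)\psi)=aA_R\psi+bA_R\hat{J}\psi=aA_R\psi+b\hat{J}A_R\psi=(a+ib)A\psi .
\end{eqnarray}
So $A$ is $\C$-linear, and its realification is $A_R$ by construction. Hence the correspondence $A\mapsto A_R$ is a bijection between $\End_{\C}(H)$ and the real commutant $\{\hat{J}\}'\subset\End_{\R}(V)$.

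\emph{Block form.} Write a general real $2d\times 2d$ matrix as $A_R=\begin{pmatrix}P&Q\\R&S\end{pmatrix}$, and use $\hat{J}=\begin{pmatrix}0&-\hat{\id}\\ \hat{\id}&0\end{pmatrix}$. Multiplying out,
\begin{eqnarray}
A_R\hat{J}=\begin{pmatrix}Q&-P\\S&-R\end{pmatrix},\qquad \hat{J}A_R=\begin{pmatrix}-R&-S\\P&Q\end{pmatrix}.
\end{eqnarray}
Equating the two products gives $Q=-R$ and $S=P$. Setting $X=P$ and $Y=R$ yields Eq.~(\ref{eq:sm_complex_block}).

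To confirm that this block matrix represents $X+iY$, act on $\psi=x+iy$:
\begin{eqnarray}
(X+iY)(x+iy)=(Xx-Yy)+i(Yx+Xy).
\end{eqnarray}
In coordinates $(x,y)$ this is exactly the action of $\begin{pmatrix}X&-Y\\Y&X\end{pmatrix}$.

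There is no real obstacle in this argument. The only point needing care is the converse: $\hat{J}$-commutation together with $\R$-linearity upgrades to $\C$-linearity, because every complex scalar decomposes as $a\hat{\id}+b\hat{J}$.
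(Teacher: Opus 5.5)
Your proof is correct and essentially the paper's. The forward direction is the same observation that $A(i\psi)=iA\psi$ is $A_R\hat{J}=\hat{J}A_R$, and your block computation giving $Q=-R$, $S=P$ is exactly the paper's argument. The only difference is in the converse: the paper reads complex linearity off the block form (multiplication by $X+iY$), whereas you first prove it basis-free via $a+ib\mapsto a\hat{\id}+b\hat{J}$ and then derive the block form separately, which is slightly cleaner but not a different strategy.
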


\begin{proof}---If $A$ is complex linear, then $A(i\psi)=iA\psi$, which is exactly $A_R\hat{J}=\hat{J}A_R$. Conversely, let $A_R$ commute with $\hat{J}$. Write $A_R$ in real blocks,
\begin{eqnarray}
A_R=\begin{pmatrix}P&Q\\R&S\end{pmatrix}.
\end{eqnarray}
The equation $A_R\hat{J}=\hat{J}A_R$ gives $Q=-R$ and $S=P$. Hence, $A_R$ has the form Eq.~(\ref{eq:sm_complex_block}) with $X=P$ and $Y=R$, which is multiplication by $X+iY$ on $x+iy$.
\end{proof}

For Hermitian operators, the corresponding real matrices are real symmetric and commute with $\hat{J}$:
\begin{eqnarray}
T(\hat{A})=\operatorname{Re}(\hat{A})\otimes \hat{I}_F+\operatorname{Im}(\hat{A})\otimes \hat{J}_F.
\label{eq:sm_T_def}
\end{eqnarray}
When $\hat{A}=\hat{A}^\dagger$, $\operatorname{Re}(\hat{A})^T=\operatorname{Re}(\hat{A})$ and $\operatorname{Im}(\hat{A})^T=-\operatorname{Im}(\hat{A})$; since $\hat{J}_F^T=-\hat{J}_F$, Eq.~(\ref{eq:sm_T_def}) is real symmetric. This proves that the observable algebra of complex quantum mechanics is not the full algebra of real symmetric matrices on $V$. It is the real symmetric part of the $\hat{J}$-commutant.

\begin{remark}[Conditional uniqueness]
The uniqueness result in Ref.~\cite{SM-BarriosHita2026} assumes, among other conditions, real linearity, preservation of length and orthogonality, and the requirement that the complex $U(1)$ phase ambiguity be represented by an $SO(2)$ action. Under those assumptions the flag representation is natural and essentially unique. This is a conditional uniqueness theorem: once one decides to encode the phase group as an $SO(2)$ gauge and to keep only gauge-preserving observables, the representation is fixed up to isomorphism. It is not a derivation of a real theory without complex structure.
\end{remark}

\section{$\hat{J}$-superselection and forbidden flag measurements}\label{sec:superselection}

The physical state of a nonzero vector in complex quantum mechanics is a ray. In the real flag representation, the corresponding redundancy is not merely a sign but the full orbit generated by $\hat{J}$:
\begin{eqnarray}
v\sim e^{\alpha \hat{J}}v, \quad \alpha\in\R.
\label{eq:sm_gauge_equiv_vec}
\end{eqnarray}
Operational equivalence requires that every allowed effect assign the same probability to all representatives of the same orbit.

\begin{theorem}[Superselection of the complex structure]\label{thm:sm_superselection}
Let $V$ be a finite-dimensional real Hilbert space equipped with $\hat{J}^2=-\hat{\id}$ and $\hat{J}^T=-\hat{J}$. Let $\hat{E}=\hat{E}^T$ be a real effect. The following are equivalent:
\begin{enumerate}
\item[(i)] $v^T \hat{E} v=(e^{\alpha \hat{J}}v)^T \hat{E}(e^{\alpha \hat{J}}v)$ for all $v\in V$ and all $\alpha\in\R$.
\item[(ii)] $e^{-\alpha \hat{J}}\hat{E}e^{\alpha \hat{J}}=\hat{E}$ for all $\alpha\in\R$.
\item[(iii)] $[\hat{E},\hat{J}]=0$.
\end{enumerate}
\end{theorem}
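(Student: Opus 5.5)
I will prove the cycle (i)$\Rightarrow$(ii)$\Rightarrow$(iii)$\Rightarrow$(i), using only elementary facts about the one-parameter group $R_\alpha:=e^{\alpha\hat{J}}$. Since $\hat{J}^2=-\hat{\id}$, the exponential series splits as $R_\alpha=\cos\alpha\,\hat{\id}+\sin\alpha\,\hat{J}$. Since $\hat{J}^T=-\hat{J}$, we get $R_\alpha^T=e^{\alpha\hat{J}^T}=e^{-\alpha\hat{J}}=R_\alpha^{-1}$, so each $R_\alpha$ is orthogonal. I will use this orthogonality to rewrite the right-hand side of (i) as $v^T R_\alpha^T\hat{E}R_\alpha v=v^T\,(e^{-\alpha\hat{J}}\hat{E}e^{\alpha\hat{J}})\,v$.

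\textbf{Step (i)$\Rightarrow$(ii).} Put $\hat{E}_\alpha:=e^{-\alpha\hat{J}}\hat{E}e^{\alpha\hat{J}}$. By the rewriting above, (i) states $v^T(\hat{E}_\alpha-\hat{E})v=0$ for all $v$. The difference $\hat{E}_\alpha-\hat{E}$ is symmetric, because $\hat{E}_\alpha^T=R_\alpha^T\hat{E}^TR_\alpha=\hat{E}_\alpha$. A real symmetric matrix whose quadratic form vanishes identically is zero, by polarization: $2u^TMw=(u+w)^TM(u+w)-u^TMu-w^TMw$. Hence $\hat{E}_\alpha=\hat{E}$.

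This step is the only place where symmetry of $\hat{E}$ is essential. Over $\R$ a quadratic form detects only the symmetric part of a matrix, so without $\hat{E}=\hat{E}^T$ the implication would fail. I therefore expect the polarization argument, together with the check that conjugation by an orthogonal $R_\alpha$ preserves symmetry, to be the main point requiring care.

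\textbf{Step (ii)$\Rightarrow$(iii).} Differentiate (ii) at $\alpha=0$. The map $\alpha\mapsto e^{-\alpha\hat{J}}\hat{E}e^{\alpha\hat{J}}$ is smooth with derivative $-\hat{J}\hat{E}+\hat{E}\hat{J}$ at $0$. Since the map is constant, this derivative vanishes, giving $[\hat{E},\hat{J}]=0$. Alternatively, one can avoid calculus by substituting the closed form of $R_\alpha$ at $\alpha=\pi/2$, where $R_{\pi/2}=\hat{J}$. Then (ii) reads $\hat{J}^{-1}\hat{E}\hat{J}=\hat{E}$, which is directly (iii).

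\textbf{Step (iii)$\Rightarrow$(i).} If $\hat{E}$ commutes with $\hat{J}$, it commutes with every power of $\hat{J}$, hence with $R_\alpha=\cos\alpha\,\hat{\id}+\sin\alpha\,\hat{J}$. Therefore $\hat{E}_\alpha=R_\alpha^{-1}\hat{E}R_\alpha=\hat{E}$, which is (ii). Evaluating both sides of (ii) in the quadratic form $v^T(\cdot)v$ and applying the orthogonality rewriting gives (i), closing the cycle.

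Finally, I will remark how this connects to Proposition~\ref{prop:commutant}. Condition (iii) says precisely that $\hat{E}$ is the realification of a complex-linear operator. Together with $\hat{E}=\hat{E}^T$, this identifies the allowed effects with the operators $T(\hat{A})$ of Eq.~(\ref{eq:sm_T_def}) for Hermitian $\hat{A}$. This justifies the forbidden-flag-measurement statement that follows: $\hat{\id}\otimes\ketbra{0}{0}_F$ does not commute with $\hat{\id}\otimes\hat{J}_F$.
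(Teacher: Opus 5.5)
Your proof is correct and follows the paper's argument. Using $(e^{\alpha\hat{J}})^T=e^{-\alpha\hat{J}}$, you turn (i) into a vanishing symmetric quadratic form to get (ii), you differentiate at $\alpha=0$ to get (iii), and you use that $\hat{E}$ commutes with $e^{\alpha\hat{J}}$ for the converse; your explicit polarization step, the check that $e^{-\alpha\hat{J}}\hat{E}e^{\alpha\hat{J}}$ stays symmetric, and the calculus-free variant at $\alpha=\pi/2$ (where $e^{\alpha\hat{J}}=\hat{J}$) only fill in details the paper leaves implicit.
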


\begin{proof}---The equivalence of (i) and (ii) follows because a real symmetric quadratic form that vanishes on all $v$ is the zero form. Since $(e^{\alpha \hat{J}})^T=e^{-\alpha \hat{J}}$, condition (i) is equivalent to
\begin{eqnarray}
v^T\bigl(e^{-\alpha \hat{J}}\hat{E}e^{\alpha \hat{J}}-\hat{E}\bigr)v=0
\end{eqnarray}
for all $v$. This implies (ii). Differentiating (ii) at $\alpha=0$ gives
\begin{eqnarray}
-\hat{J}\hat{E}+\hat{E}\hat{J}=0,
\end{eqnarray}
which is (iii). Conversely, if $[\hat{E},\hat{J}]=0$, then $\hat{E}$ commutes with $e^{\alpha \hat{J}}$ for all $\alpha$, so (ii) holds.
\end{proof}

\begin{corollary}[Flag-only measurements are not physical]
The projectors
\begin{eqnarray}
\hat{E}_0=\hat{\id}_d\otimes\ketbra{0}{0}_F,
\quad
\hat{E}_1=\hat{\id}_d\otimes\ketbra{1}{1}_F
\label{eq:sm_flag_meas}
\end{eqnarray}
are not allowed physical effects in the real representation of complex quantum mechanics.
\end{corollary}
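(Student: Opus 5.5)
By Theorem~\ref{thm:sm_superselection}, a real symmetric effect is an allowed physical effect, meaning it gives the same probability on every $SO(2)$ gauge orbit, if and only if it commutes with $\hat{J}$. So I only need to show $[\hat{E}_0,\hat{J}]\neq 0$ and $[\hat{E}_1,\hat{J}]\neq 0$, where $\hat{J}=\hat{\id}_d\otimes\hat{J}_F$ in the flag ordering of Eq.~(\ref{eq:sm_S_def}).

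First I reduce to the flag factor:
\begin{eqnarray}
[\hat{E}_0,\hat{J}]=\hat{\id}_d\otimes[\ketbra{0}{0}_F,\hat{J}_F].
\end{eqnarray}
Then I compute the flag commutator with Eq.~(\ref{eq:sm_JF}):
\begin{eqnarray}
\ketbra{0}{0}_F\hat{J}_F &=& -\ketbra{0}{1}_F, \nonumber\\
\hat{J}_F\ketbra{0}{0}_F &=& \ketbra{1}{0}_F .
\end{eqnarray}
This gives
\begin{eqnarray}
[\ketbra{0}{0}_F,\hat{J}_F]=-(\ketbra{0}{1}_F+\ketbra{1}{0}_F)=-\hat{X}_F\neq 0 .
\end{eqnarray}
Since $\hat{E}_1=\hat{\id}-\hat{E}_0$, it follows that $[\hat{E}_1,\hat{J}]=+\hat{\id}_d\otimes\hat{X}_F\neq 0$. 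Theorem~\ref{thm:sm_superselection} then rules out both effects.

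For a concrete witness of gauge dependence, I take $v=S\ket{\psi}$ with $\ket{\psi}$ real and normalized, so $v=\ket{\psi}\otimes\ket{0}_F$. Then $v^T\hat{E}_0v=1$. The gauge-equivalent vector $e^{(\pi/2)\hat{J}}v=S(i\ket{\psi})=\ket{\psi}\otimes\ket{1}_F$ gives probability $0$. The corresponding complex ray is the same, yet the effect distinguishes the two representatives. So condition (i) of Theorem~\ref{thm:sm_superselection} fails explicitly.

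There is no real obstacle in this argument. The only points needing care are the sign conventions for $\hat{J}_F$ and making sure $\hat{J}$ acts as $\hat{\id}_d\otimes\hat{J}_F$ in the tensor ordering $\R^d\otimes\R^2_F$ rather than in the block ordering of Eq.~(\ref{eq:sm_J_block}). The conclusion is unaffected either way, because the commutator is nonzero in any ordering.
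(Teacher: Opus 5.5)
Your proof is correct and takes essentially the same approach as the paper: the paper also notes that $\hat{E}_0$ fails to commute with the flag complex structure, and it uses the same witness, $x\otimes\ket{0}_F$ versus $\hat{J}(x\otimes\ket{0}_F)=x\otimes\ket{1}_F$, which give probabilities $\norm{x}^2$ and $0$. Your explicit commutator $[\ketbra{0}{0}_F,\hat{J}_F]=-(\ketbra{0}{1}_F+\ketbra{1}{0}_F)$ and your treatment of $\hat{E}_1$ via $\hat{E}_1=\hat{\id}-\hat{E}_0$ fill in details the paper leaves implicit.
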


\begin{proof}---Using Eq.~(\ref{eq:sm_JF}), $[\hat{E}_0,\hat{J}_F]\ne0$. Equivalently, take a nonzero real vector $x\in\R^d$. The representatives $x\otimes\ket{0}_F$ and $\hat{J}(x\otimes\ket{0}_F)=x\otimes\ket{1}_F$ correspond to complex states differing by a phase $i$, but
\begin{eqnarray}
(x\otimes\ket{0})^T \hat{E}_0(x\otimes\ket{0})=\norm{x}^2,
\quad
(x\otimes\ket{1})^T \hat{E}_0(x\otimes\ket{1})=0.
\end{eqnarray}
Thus, $\hat{E}_0$ distinguishes representatives of the same complex ray.
\end{proof}

This is the first nontrivial physical assumption of the quotient-space real formulation. If the real flag is an ordinary subsystem, then the above measurement is as legitimate as any other real projector. To reproduce complex quantum mechanics, one must impose a superselection rule forbidding it and every other non-$\hat{J}$-commuting effect. The statement extends to POVMs by applying the theorem to every effect in the POVM.

The same restriction applies to instruments. Let $\{\mathcal I_r\}$ be an instrument on real density matrices. It is compatible with the gauge if, whenever $\hat{\rho}$ and $e^{\alpha \hat{J}}\hat{\rho} e^{-\alpha \hat{J}}$ are gauge-equivalent, the outcome probabilities and posterior states are gauge-equivalent. A sufficient condition is covariance of every completely positive branch,
\begin{eqnarray}
\mathcal I_r(e^{\alpha \hat{J}}\hat{\rho} e^{-\alpha \hat{J}})=e^{\alpha \hat{J}}\mathcal I_r(\hat{\rho})e^{-\alpha \hat{J}}.
\label{eq:sm_instrument_cov}
\end{eqnarray}
Without such a condition, the instrument can leak information about the flag phase and thereby destroy equivalence with complex quantum mechanics.

\section{Quotient tensor product as a balanced tensor product}\label{sec:balanced}

The key composite-system construction in Ref.~\cite{SM-BarriosHita2026} is obtained by taking an ordinary real tensor product and quotienting by the kernel of the tensor product of inverse single-system maps. We now rewrite this construction in coordinate-free form.

\begin{definition}[Balanced real tensor product over $\hat{J}$]\label{def:balanced}
Let $(V_A,\hat{J}_A)$ and $(V_B,\hat{J}_B)$ be real vector spaces with complex structures. Define
\begin{eqnarray}
V_A\bal V_B
=\frac{V_A\otimes_{\R}V_B}{\cN_{AB}},
\quad
\cN_{AB}=\Span_{\R}\{\hat{J}_A v\otimes w-v\otimes \hat{J}_B w: v\in V_A,w\in V_B\}.
\label{eq:sm_balanced_def}
\end{eqnarray}
The equivalence class of $v\otimes w$ is denoted $v\bal w$.
\end{definition}

The defining relation is
\begin{eqnarray}
\hat{J}_A v\bal w=v\bal \hat{J}_B w.
\label{eq:sm_i_balance}
\end{eqnarray}
Applying it twice gives
\begin{eqnarray}
\hat{J}_A v\bal \hat{J}_B w=v\bal \hat{J}_B^2 w=-v\bal w.
\label{eq:sm_minus_relation}
\end{eqnarray}
Thus, the product of two local $i$ factors is identified with the scalar $-1$, as in complex tensor products.

\begin{theorem}[Balanced quotient equals complex tensor product]\label{thm:sm_balanced_iso}
Let $H_A$ and $H_B$ be complex Hilbert spaces, and let $(V_A,\hat{J}_A)=(H_A)_{\R}$ and $(V_B,\hat{J}_B)=(H_B)_{\R}$. Then, the map
\begin{eqnarray}
\Phi:V_A\bal V_B\to (H_A\otimes_{\C}H_B)_{\R},
\quad
\Phi(v\bal w)=v\otimes_{\C}w
\label{eq:sm_Phi_def}
\end{eqnarray}
is a canonical real-linear isomorphism.
\end{theorem}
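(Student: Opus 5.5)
We will prove the statement in three steps: show that $\Phi$ is well defined on the quotient, show that it is surjective, and then compare real dimensions to obtain injectivity. Afterwards we check compatibility with the complex structures and the inner products, so that the isomorphism is one of real Hilbert spaces with complex structure, as claimed in Theorem~\ref{thm:balanced_main}.

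\emph{Well-definedness.} We start from the real-bilinear map $\beta:V_A\times V_B\to (H_A\otimes_{\C}H_B)_{\R}$ given by $\beta(v,w)=v\otimes_{\C}w$. It is real bilinear because it is complex bilinear. By the universal property of $\otimes_{\R}$, it induces a real-linear map $\tilde\Phi:V_A\otimes_{\R}V_B\to (H_A\otimes_{\C}H_B)_{\R}$. On the spanning elements of $\cN_{AB}$ we have
\[
\tilde\Phi(\hat{J}_A v\otimes w-v\otimes\hat{J}_B w)=(iv)\otimes_{\C}w-v\otimes_{\C}(iw)=0 .
\]
Hence $\cN_{AB}\subseteq\ker\tilde\Phi$, and $\tilde\Phi$ descends to $\Phi$ on $V_A\bal V_B$. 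The map is canonical because no bases are used.

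\emph{Surjectivity and the dimension count.} Every element of $H_A\otimes_{\C}H_B$ is a complex combination $\sum_k c_k\, v_k\otimes_{\C}w_k$. Absorbing each scalar into the first factor, $c_k v_k\in V_A$, writes it as a real sum of simple tensors, so $\Phi$ is onto. The target has real dimension $2d_Ad_B$, so it remains to show $\dim_{\R}(V_A\bal V_B)\le 2d_Ad_B$. Choose complex bases $\{e_j\}$ of $H_A$ and $\{f_k\}$ of $H_B$; then $V_A$ has real basis $\{e_j,\hat{J}_Ae_j\}$ and $V_B$ has real basis $\{f_k,\hat{J}_Bf_k\}$. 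The four families $e_j\otimes f_k$, $\hat{J}_Ae_j\otimes f_k$, $e_j\otimes\hat{J}_Bf_k$ and $\hat{J}_Ae_j\otimes\hat{J}_Bf_k$ span $V_A\otimes_{\R}V_B$. In the quotient, the balancing relation~(\ref{eq:sm_i_balance}) identifies $e_j\bal\hat{J}_Bf_k$ with $\hat{J}_Ae_j\bal f_k$, and the relation~(\ref{eq:sm_minus_relation}) identifies $\hat{J}_Ae_j\bal\hat{J}_Bf_k$ with $-e_j\bal f_k$. Therefore the $2d_Ad_B$ classes $e_j\bal f_k$ and $\hat{J}_Ae_j\bal f_k$ span $V_A\bal V_B$. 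A surjection from a space of dimension at most $2d_Ad_B$ onto a space of dimension $2d_Ad_B$ is a bijection, which gives injectivity; it also shows that $\dim_{\R}\cN_{AB}=2d_Ad_B$, i.e. the quotient removes exactly half of $V_A\otimes_{\R}V_B$.

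\emph{Structure compatibility.} We define $\hat{J}_{AB}(v\bal w)=\hat{J}_Av\bal w$. This is well defined because $\hat{J}_A\otimes\hat{\id}$ preserves $\cN_{AB}$, as $\hat{J}_A$ commutes with itself and with the relation. The map $\Phi$ intertwines $\hat{J}_{AB}$ with multiplication by $i$. We then define the inner product on $V_A\bal V_B$ by pulling back $\operatorname{Re}\langle\cdot,\cdot\rangle_{\C}$ through $\Phi$. On simple classes this gives $\operatorname{Re}\bigl(\langle v,v'\rangle\langle w,w'\rangle\bigr)$, which can be written entirely in terms of the real inner products and $\hat{J}$ via $\langle v,v'\rangle=(v,v')_{\R}+i(v,\hat{J}v')_{\R}$. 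This realizes the isomorphism as one of real Hilbert spaces with complex structure.

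The main obstacle is injectivity, i.e. showing that $\cN_{AB}$ is exactly $\ker\tilde\Phi$ and not a proper subspace. The dimension bound via the spanning set above is the cleanest way to settle it, and it avoids computing $\ker\tilde\Phi$ directly.
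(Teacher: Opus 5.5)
Your proof is correct and takes the same route as the paper. The real-bilinear map $(v,w)\mapsto v\otimes_{\C}w$ kills the generators of $\cN_{AB}$ and descends to a surjection, and injectivity follows from the dimension count with the spanning classes $e_j\bal f_k$ and $\hat{J}_Ae_j\bal f_k$. Your framing of that count as an upper bound $\dim_{\R}(V_A\bal V_B)\le 2d_Ad_B$ plus surjectivity is the logically tight form of the paper's one-line dimension claim. The $\hat{J}$ and inner-product compatibility you add is a harmless extra that the paper's proof leaves implicit.
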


\begin{proof}---The map from $V_A\times V_B$ to $(H_A\otimes_{\C}H_B)_{\R}$ given by $(v,w)\mapsto v\otimes_{\C}w$ is real bilinear. Moreover,
\begin{eqnarray}
\Phi(\hat{J}_A v\otimes w-v\otimes \hat{J}_B w)=iv\otimes_{\C}w-v\otimes_{\C}iw=0,
\end{eqnarray}
so it vanishes on $\cN_{AB}$ and descends to a well-defined map on the quotient. The image contains all simple complex tensors and hence spans $H_A\otimes_{\C}H_B$ over $\C$, therefore over $\R$. Finally,
\begin{eqnarray}
\dim_{\R}(V_A\bal V_B)=2\dim_{\C}H_A\dim_{\C}H_B
=\dim_{\R}(H_A\otimes_{\C}H_B)_{\R},
\end{eqnarray}
so the surjective map is an isomorphism. The dimension formula follows by choosing complex bases $\{e_a\}$ and $\{f_b\}$ and observing that each pair contributes two real classes, represented by $e_a\bal f_b$ and $\hat{J}_A e_a\bal f_b$.
\end{proof}

\begin{proposition}[Equivalence with the flag-kernel quotient]\label{prop:kernel_equiv}
For two single-system flag spaces, the balanced relation in Eq.~(\ref{eq:sm_balanced_def}) is equivalent to the kernel relations used in Ref.~\cite{SM-BarriosHita2026}, namely
\begin{eqnarray}
\ket{00}_{F_AF_B}+\ket{11}_{F_AF_B}\sim0,
\quad
\ket{01}_{F_AF_B}-\ket{10}_{F_AF_B}\sim0.
\label{eq:sm_kernel_relations}
\end{eqnarray}
\end{proposition}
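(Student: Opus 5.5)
Work in the flag coordinates $V_A=\R^{d_A}\otimes\R^2_{F_A}$ and $V_B=\R^{d_B}\otimes\R^2_{F_B}$, with $\hat{J}_A=\hat{\id}\otimes\hat{J}_F$ and $\hat{J}_B=\hat{\id}\otimes\hat{J}_F$. Reorder the ambient space as $V_A\otimes_{\R}V_B\simeq(\R^{d_A}\otimes\R^{d_B})\otimes(\R^2_{F_A}\otimes\R^2_{F_B})$. In this ordering both generators act only on the flag factor, so the subspace $\cN_{AB}$ has the form $(\R^{d_A}\otimes\R^{d_B})\otimes K$ for a flag subspace $K$. The task therefore reduces to a computation in the four-dimensional flag space: show that $K=\Span_{\R}\{\ket{00}+\ket{11},\ket{01}-\ket{10}\}$. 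The relations in Eq.~(\ref{eq:sm_kernel_relations}), tensored with arbitrary system vectors, then generate exactly $\cN_{AB}$.

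First I justify the product form of $\cN_{AB}$. Every spanning element is bilinear in $(v,w)$, so it suffices to let $v=x\otimes f$ and $w=y\otimes g$ range over system vectors $x,y$ and flag basis vectors $f,g\in\{\ket0,\ket1\}$. Each such generator becomes $(x\otimes y)\otimes(\hat{J}_Ff\otimes g-f\otimes\hat{J}_Fg)$. Hence $\cN_{AB}=(\R^{d_A}\otimes\R^{d_B})\otimes K$, where $K=\Span\{\hat{J}_Ff\otimes g-f\otimes\hat{J}_Fg\}$ is the image of the operator $\hat{J}_F\otimes\hat{\id}-\hat{\id}\otimes\hat{J}_F$ on $\R^2\otimes\R^2$.

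Next I compute this operator on the four basis vectors using Eq.~(\ref{eq:sm_JF}):
\begin{eqnarray}
\ket{00}\mapsto\ket{10}-\ket{01},\quad
\ket{01}\mapsto\ket{11}+\ket{00},\nonumber\\
\ket{10}\mapsto-\ket{00}-\ket{11},\quad
\ket{11}\mapsto-\ket{01}+\ket{10}.
\end{eqnarray}
The image is therefore spanned by $\ket{00}+\ket{11}$ and $\ket{01}-\ket{10}$, and these are linearly independent, so $K$ is exactly this span. Tensoring with $\R^{d_A}\otimes\R^{d_B}$ gives the claimed equivalence. As a consistency check, $\dim K=2$ yields a quotient of dimension $4d_Ad_B-2d_Ad_B=2d_Ad_B$, matching Theorem~\ref{thm:sm_balanced_iso}.

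Finally, to connect with the description in Ref.~\cite{SM-BarriosHita2026} as $\ker (S^{-1})^{\otimes2}$, I note that $(S^{-1})^{\otimes 2}$, read as the real-bilinear map $v\otimes w\mapsto S^{-1}v\otimes_{\C}S^{-1}w$, is precisely the map underlying $\Phi$ of Theorem~\ref{thm:sm_balanced_iso}. Theorem~\ref{thm:sm_balanced_iso} shows that the induced map on the quotient by $\cN_{AB}$ is injective, so the kernel equals $\cN_{AB}$. The main obstacle is interpretive rather than computational: I must fix the convention that $(S^{-1})^{\otimes2}$ lands in $\otimes_{\C}$ and not in $\C^{d_A}\otimes_{\R}\C^{d_B}$, because under the second reading the kernel would be different.
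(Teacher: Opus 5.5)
Your proposal is correct and follows essentially the same route as the paper: evaluate the balanced generator $\hat{J}_A v\otimes w-v\otimes \hat{J}_B w$ on flag basis vectors, obtain $\ket{10}-\ket{01}$ and $\ket{11}+\ket{00}$, and use linearity for the converse. You are more explicit than the paper in four places. You track the system factor to get $\cN_{AB}=(\R^{d_A}\otimes\R^{d_B})\otimes K$. You compute all four basis images, so the converse inclusion is visible rather than asserted. You add the dimension check. You identify $\ker(S^{-1})^{\otimes 2}$ with $\cN_{AB}$ via Theorem~\ref{thm:sm_balanced_iso}, which the paper leaves implicit.
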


\begin{proof}---Using $\hat{J}\ket{0}=\ket{1}$ and $\hat{J}\ket{1}=-\ket{0}$, the balanced relation $\hat{J}_A v\otimes w-v\otimes \hat{J}_B w\sim0$ gives, for $v=\ket{0}_{F_A}$ and $w=\ket{0}_{F_B}$,
\begin{eqnarray}
\ket{10}-\ket{01}\sim0,
\end{eqnarray}
which is the second relation in Eq.~(\ref{eq:sm_kernel_relations}) up to sign. For $v=\ket{0}_{F_A}$ and $w=\ket{1}_{F_B}$, it gives
\begin{eqnarray}
\ket{11}+\ket{00}\sim0,
\end{eqnarray}
which is the first relation. Conversely, these two relations imply the balanced relations on the flag basis and hence by linearity on the whole tensor product.
\end{proof}

The theorem shows that the quotient composition is not merely similar to the complex tensor product. It is naturally isomorphic to the complex tensor product, regarded as a real vector space, with the complex scalar-balancing relation made explicit. This is why the quotient theory reproduces complex quantum mechanics: the composition rule has been chosen to be isomorphic to complex composition.

For $N$ parties, the construction is associative because tensor products over $\C$ are associative. In real notation, one may define
\begin{eqnarray}
V_1\bal\cdots\bal V_N
=\left(V_1\otimes_{\R}\cdots\otimes_{\R}V_N\right)/\cN_N,
\label{eq:sm_N_balanced}
\end{eqnarray}
where $\cN_N$ is generated by the relations that move $\hat{J}$ from any factor to any other factor. Equivalently,
\begin{eqnarray}
\hat{J}_i v_1\otimes\cdots\otimes v_i\otimes\cdots\otimes v_N
\sim
\hat{J}_j v_1\otimes\cdots\otimes v_j\otimes\cdots\otimes v_N
\label{eq:sm_move_J}
\end{eqnarray}
with $\hat{J}_i$ acting on the $i$th factor. This quotient is canonically isomorphic to $(H_1\otimes_{\C}\cdots\otimes_{\C}H_N)_{\R}$.

\section{Multipartite flag representatives and overhead}\label{sec:flags}

The quotient $V_1\bal\cdots\bal V_N$ has real dimension $2\prod_i d_i$, where $d_i=\dim_{\C}H_i$. Thus, the abstract quotient does not have an exponential overhead relative to complex quantum mechanics. Nevertheless, the canonical representatives used in the ambient local-flag tensor product live inside $(\R^2)^{\otimes N}$ and have a global parity form.

For equal local dimensions, Ref.~\cite{SM-BarriosHita2026} writes the canonical flag representatives as
\begin{eqnarray}
\ket{\psi^{(N)}_{\even}} &=& \frac{1}{\sqrt{2^{N-1}}}\sum_{\substack{k\in\{0,1\}^N\\ |k|_H~\mathrm{even}}}(-1)^{|k|_H/2}\ket{k},
\label{eq:sm_even_flag}\\
\ket{\psi^{(N)}_{\odd}} &=& \frac{1}{\sqrt{2^{N-1}}}\sum_{\substack{k\in\{0,1\}^N\\ |k|_H~\mathrm{odd}}}(-1)^{(|k|_H-1)/2}\ket{k},
\label{eq:sm_odd_flag}
\end{eqnarray}
where $\abs{k}_H$ is the Hamming weight. A complex state $\ket{\Psi}\in(\C^d)^{\otimes N}$ is represented canonically as
\begin{eqnarray}
R(\ket{\Psi})= \operatorname{Re}\ket{\Psi}\otimes\ket{\psi^{(N)}_{\even}} + \operatorname{Im}\ket{\Psi}\otimes\ket{\psi^{(N)}_{\odd}}.
\label{eq:sm_R_N}
\end{eqnarray}

\begin{lemma}[Canonical flags are not preserved by ordinary tensor product]\label{lem:not_preserved}
For positive $N$ and $M$, the ordinary tensor product of canonical flag representatives is not, in general, a linear combination of the two canonical representatives for $N+M$ parties. Instead,
\begin{eqnarray}
\ket{\psi^{(N+M)}_{\even}} &=&\frac{1}{\sqrt2}\left( \ket{\psi^{(N)}_{\even}}\ket{\psi^{(M)}_{\even}} - \ket{\psi^{(N)}_{\odd}}\ket{\psi^{(M)}_{\odd}} \right),
\label{eq:sm_even_composition}\\
\ket{\psi^{(N+M)}_{\odd}} &=& \frac{1}{\sqrt2}\left( \ket{\psi^{(N)}_{\even}}\ket{\psi^{(M)}_{\odd}} + \ket{\psi^{(N)}_{\odd}}\ket{\psi^{(M)}_{\even}} \right).
\label{eq:sm_odd_composition}
\end{eqnarray}
\end{lemma}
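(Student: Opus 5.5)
The plan is to encode both canonical flag families in a single complex product vector, so that the composition laws become the real and imaginary parts of a factorization identity. Compare the coefficients in Eqs.~(\ref{eq:sm_even_flag})--(\ref{eq:sm_odd_flag}) with powers of $i$. For $|k|_H$ even, $i^{|k|_H}=(-1)^{|k|_H/2}$ is real. For $|k|_H$ odd, $i^{|k|_H}=i(-1)^{(|k|_H-1)/2}$ is purely imaginary. Hence, defining the complex vector
\begin{eqnarray}
\Omega_N=\sum_{k\in\{0,1\}^N} i^{|k|_H}\ket{k}=\left(\ket{0}+i\ket{1}\right)^{\otimes N},
\end{eqnarray}
one has
\begin{eqnarray}
\ket{\psi^{(N)}_{\even}}=2^{-(N-1)/2}\operatorname{Re}\Omega_N,
\quad
\ket{\psi^{(N)}_{\odd}}=2^{-(N-1)/2}\operatorname{Im}\Omega_N .
\end{eqnarray}
Here real and imaginary parts are taken componentwise in the computational flag basis. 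The key structural fact is that $\Omega_N$ is a product vector, so $\Omega_{N+M}=\Omega_N\otimes\Omega_M$.

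Next, take real and imaginary parts of this factorization. Since the basis vectors are real, the tensor product is bilinear over real coefficient vectors, and so
\begin{eqnarray}
\operatorname{Re}\Omega_{N+M}=\operatorname{Re}\Omega_N\otimes\operatorname{Re}\Omega_M-\operatorname{Im}\Omega_N\otimes\operatorname{Im}\Omega_M,
\end{eqnarray}
\begin{eqnarray}
\operatorname{Im}\Omega_{N+M}=\operatorname{Re}\Omega_N\otimes\operatorname{Im}\Omega_M+\operatorname{Im}\Omega_N\otimes\operatorname{Re}\Omega_M .
\end{eqnarray}
Substitute the normalizations. The prefactors combine as $2^{(N-1)/2}2^{(M-1)/2}/2^{(N+M-1)/2}=2^{-1/2}$, which gives Eqs.~(\ref{eq:sm_even_composition}) and (\ref{eq:sm_odd_composition}) directly.

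For the qualitative claim that a product of canonical flags is generally not a combination of the two $(N+M)$-party canonical flags, first check orthonormality. For $N\ge1$, each of $\ket{\psi^{(N)}_{\even}}$ and $\ket{\psi^{(N)}_{\odd}}$ has $2^{N-1}$ coefficients, each equal to $\pm 2^{-(N-1)/2}$, so both are unit vectors. They have disjoint parity supports and are therefore orthogonal. Now consider the vector $\ket{\psi^{(N)}_{\even}}\ket{\psi^{(M)}_{\even}}+\ket{\psi^{(N)}_{\odd}}\ket{\psi^{(M)}_{\odd}}$. It has even total parity, so it is orthogonal to $\ket{\psi^{(N+M)}_{\odd}}$. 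Its inner product with $\ket{\psi^{(N+M)}_{\even}}$ is $\tfrac{1}{\sqrt2}(1-1)=0$, and it has norm $\sqrt2$. The product $\ket{\psi^{(N)}_{\even}}\ket{\psi^{(M)}_{\even}}$ is the average of this vector and $\sqrt2\,\ket{\psi^{(N+M)}_{\even}}$. It therefore has a nonzero component orthogonal to the span of the two canonical $(N+M)$-party flags, and so it does not lie in that span.

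I expect the only delicate point to be the bookkeeping of signs and normalizations. In particular, one must verify that the phase convention $(-1)^{(|k|_H-1)/2}$ for odd weights matches $\operatorname{Im} i^{|k|_H}$ rather than its negative. If the convention differed by a sign, the $\Omega_N$ trick would still apply with $\Omega_N$ replaced by its complex conjugate.
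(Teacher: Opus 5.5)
Your proof is correct, and it takes a different route from the paper's. The paper argues by direct bit-string bookkeeping. A concatenated string $kl$ has even weight exactly when $k$ and $l$ have equal parity, and the sign exponents add. The odd--odd block acquires the extra $-1$ because $(|k|_H-1)/2+(|l|_H-1)/2=|kl|_H/2-1$, and normalization supplies the factor $1/\sqrt2$.

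You instead package both flag families into the single product vector $\Omega_N=(\ket{0}+i\ket{1})^{\otimes N}$. The two composition laws are then just the real and imaginary parts of $\Omega_{N+M}=\Omega_N\otimes\Omega_M$, i.e.\ of the rule $(a+ib)(c+id)=(ac-bd)+i(ad+bc)$.

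\textbf{What each route buys.} The paper's route is more elementary and stays entirely in the notation in which the flags are defined. Yours makes the signs automatic and explains why the pair $\bigl(\ket{\psi^{(N)}_{\even}},\ket{\psi^{(N)}_{\odd}}\bigr)$ composes like the real and imaginary parts of a complex number. This fits the paper's thesis closely. Since $\hat{J}_F(\ket{0}+i\ket{1})=-i(\ket{0}+i\ket{1})$, the vector $\Omega_N$ is a product of local $\hat{J}_F$-eigenvectors with a common eigenvalue. Hence $\operatorname{Re}\Omega_N$ and $\operatorname{Im}\Omega_N$ span exactly the real flag sector on which all local $\hat{J}_F$'s act identically, which is the orthogonal complement of the balancing kernel.

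\textbf{The qualitative claim.} You also prove something the paper's proof leaves unaddressed: that $\ket{\psi^{(N)}_{\even}}\ket{\psi^{(M)}_{\even}}$ is not in the span of the two $(N+M)$-party canonical flags. Your witness is the nonzero orthogonal component $\tfrac12\bigl(\ket{\psi^{(N)}_{\even}}\ket{\psi^{(M)}_{\even}}+\ket{\psi^{(N)}_{\odd}}\ket{\psi^{(M)}_{\odd}}\bigr)$. The paper only states the identities.

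\textbf{The sign convention.} The sign issue you flag as delicate is already settled by your own first step. Since $\operatorname{Im}\, i^{2m+1}=(-1)^m$ matches the odd-weight convention, no complex conjugation is needed.
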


\begin{proof}---The parity of the concatenated bit string is the sum of the parities of the two substrings. Even parity arises from even-even and odd-odd, while odd parity arises from even-odd and odd-even. The signs in Eqs.~(\ref{eq:sm_even_flag}) and (\ref{eq:sm_odd_flag}) multiply exactly as shown, with the additional minus sign in the odd-odd contribution to Eq.~(\ref{eq:sm_even_composition}) because $(|k|_H-1)/2+(|l|_H-1)/2=(|kl|_H-2)/2=|kl|_H/2-1$. Normalization gives the factor $1/\sqrt2$.
\end{proof}

Eqs.~(\ref{eq:sm_even_composition}) and (\ref{eq:sm_odd_composition}) are central. From the viewpoint of the ordinary tensor product of local flags, composing two systems changes the canonical flag representative nonlocally. From the quotient viewpoint, this is harmless because the quotient identifies
\begin{eqnarray}
\left[ \ket{\psi^{(N)}_{\even}}\ket{\psi^{(M)}_{\even}} \right] = \left[ -\ket{\psi^{(N)}_{\odd}}\ket{\psi^{(M)}_{\odd}} \right] = \left[ \ket{\psi^{(N+M)}_{\even}} \right],
\label{eq:sm_equiv_even}
\end{eqnarray}
with analogous relations for the odd class. But the distinction matters for physical interpretation. If the local flags are regarded as actual ancillas, their canonical representatives impose global parity correlations across all parties.

\begin{proposition}[No quotient overhead, but ambient flag growth]\label{prop:overhead}
For $N$ parties with local complex dimensions $d_i$, the quotient state space has real dimension $2\prod_i d_i$. However, the canonical representative in Eq.~(\ref{eq:sm_R_N}) is embedded in an ambient flag space of dimension $2^N$.
\end{proposition}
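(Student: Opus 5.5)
The proposition has two independent parts: a dimension count for the abstract quotient, and a dimension count for the ambient space that holds the canonical representative of Eq.~(\ref{eq:sm_R_N}). I will treat them separately.

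\textbf{Quotient dimension.} I will reduce to Theorem~\ref{thm:sm_balanced_iso} by induction on $N$, using the associativity remark after Proposition~\ref{prop:kernel_equiv}. The $N$-fold quotient of Eq.~(\ref{eq:sm_N_balanced}) is canonically isomorphic to $(H_1\otimes_{\C}\cdots\otimes_{\C}H_N)_{\R}$.

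\begin{enumerate}
\item[(a)] The case $N=2$ is exactly Theorem~\ref{thm:sm_balanced_iso}.
\item[(b)] For the inductive step, identify $V_1\bal\cdots\bal V_{N}$ with $(V_1\bal\cdots\bal V_{N-1})\bal V_N$. The first factor carries the complex structure induced by any $\hat{J}_i$; these all agree on the quotient by Eq.~(\ref{eq:sm_move_J}). The relations generating $\cN_N$ coincide with the iterated two-factor relations, since moving $\hat{J}$ between any two factors can be done through the last factor.
\item[(c)] Applying Theorem~\ref{thm:sm_balanced_iso} then gives the isomorphism with the realified complex tensor product.
\end{enumerate}

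Complex dimension is multiplicative, $\dim_{\C}\bigotimes_i H_i=\prod_i d_i$, and realification doubles dimension. Hence $\dim_{\R}=2\prod_i d_i$.

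As a direct check, choose complex bases $\{e^{(i)}_{a}\}$. By moving every $\hat{J}$ onto the first factor and using $\hat{J}_1^2=-\hat{\id}$, each class of a real tensor product of basis vectors and their $\hat{J}$-images reduces to $\pm e^{(1)}_{a_1}\bal\cdots$ or $\pm\hat{J}_1e^{(1)}_{a_1}\bal\cdots$. This gives a spanning set of size $2\prod d_i$, and surjectivity onto the complex side gives independence.

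\textbf{Ambient flag dimension.} This part is read off directly. $\ket{\psi^{(N)}_{\even}}$ and $\ket{\psi^{(N)}_{\odd}}$ in Eqs.~(\ref{eq:sm_even_flag})--(\ref{eq:sm_odd_flag}) are vectors in $(\R^2)^{\otimes N}$, whose dimension is $2^N$. So $R(\ket{\Psi})$ lies in $\bigotimes_i\R^{d_i}\otimes(\R^2)^{\otimes N}$, of dimension $2^N\prod_i d_i$.

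To show the representative genuinely uses this space rather than a small factorized subspace, I will compute its support. The two flag vectors have supports on all even- and all odd-weight strings respectively; every coefficient is $\pm 2^{-(N-1)/2}$. Together they cover all $2^N$ basis strings. So the representative cannot be confined to a fixed lower-dimensional coordinate subspace of the flag register. Lemma~\ref{lem:not_preserved} shows that this global structure is regenerated under composition rather than being a product of local flags.

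\textbf{Expected obstacle.} The only nonroutine point is step (b): checking that the multi-factor relation subspace $\cN_N$ equals the kernel obtained by iterating the two-factor quotient. I would handle it by comparing both with the kernel of the natural real-multilinear map to $\bigotimes_{\C}H_i$. Both quotients surject onto the same space, and the spanning-set count above bounds both dimensions by $2\prod d_i$, so the two kernels coincide.
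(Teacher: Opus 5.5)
Your proposal is correct and follows the paper's proof. The quotient dimension comes from the canonical isomorphism $V_1\bal\cdots\bal V_N\simeq(H_1\otimes_{\C}\cdots\otimes_{\C}H_N)_{\R}$, and the $2^N$ is read off from $\ket{\psi^{(N)}_{\even}},\ket{\psi^{(N)}_{\odd}}\in(\R^2)^{\otimes N}$. The one difference is that the paper only invokes the $N$-partite isomorphism, whereas your spanning-set-plus-surjectivity count proves it outright, which makes the induction and step~(b) superfluous. The support discussion is not needed for the proposition; note also that all canonical representatives lie in the $2\prod_i d_i$-dimensional subspace $\bigotimes_i\R^{d_i}\otimes\Span\{\ket{\psi^{(N)}_{\even}},\ket{\psi^{(N)}_{\odd}}\}$, which is consistent with the no-overhead half of the statement.
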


\begin{proof}---The quotient is isomorphic to $(\otimes_i^{\C}H_i)_{\R}$ by the $N$-partite version of Theorem~\ref{thm:sm_balanced_iso}, so its real dimension is $2\prod_i d_i$. On the other hand, Eq.~(\ref{eq:sm_R_N}) uses the two vectors in Eqs.~(\ref{eq:sm_even_flag}) and (\ref{eq:sm_odd_flag}), which are superpositions over bit strings in $(\R^2)^{\otimes N}$, an ambient space of dimension $2^N$.
\end{proof}

Thus, an objection based solely on quotient dimension is not correct: the abstract theory is not larger than complex quantum mechanics. The sharper objection is interpretive. Either the quotient is fundamental, in which case the flags are gauge redundancy and not physical local systems; or one insists on physical local flags, in which case the canonical representatives contain global parity constraints.

\section{Underdetermination of the locality postulate (P4)}\label{sec:P4}

Ref.~\cite{SM-BarriosHita2026} replaces the ordinary tensor-product postulate by a locality postulate (P4): the operations on one subsystem act trivially on the other subsystems, and local operators commute. In this section we spell out why this condition does not determine the quotient tensor product.

\begin{definition}[Independent preparation]
Let $H_1,\ldots,H_N$ be Hilbert spaces over a field $F$. An independent-preparation rule is a multilinear embedding
\begin{eqnarray}
\xi_F:H_1\times\cdots\times H_N \to H,
\quad
(\psi_1,\ldots,\psi_N)\mapsto\xi_F(\psi_1, \ldots, \psi_N),
\label{eq:sm_xi_def}
\end{eqnarray}
where $H$ is the Hilbert space assigned to the composite system.
\end{definition}

\begin{definition}[Local operation embedding]
For a subsystem $i$, a local-operation embedding is a linear map
\begin{eqnarray}
\eta_F^i:\End_F(H_i)\to\End_F(H).
\label{eq:sm_eta_def}
\end{eqnarray}
\end{definition}

\begin{definition}[(P4) in embedding form]
For a bipartite system, (P4) is the requirement that, for all $\hat{A}_1\in\End_F(H_1)$, $\hat{A}_2\in\End_F(H_2)$, and states $\psi_i\in H_i$,
\begin{eqnarray}
\eta_F^1(\hat{A}_1)\xi_F(\psi_1,\psi_2) &=& \xi_F(\hat{A}_1\psi_1,\psi_2),
\label{eq:sm_P4_1}\\
\eta_F^2(\hat{A}_2)\xi_F(\psi_1,\psi_2) &=& \xi_F(\psi_1,\hat{A}_2\psi_2).
\label{eq:sm_P4_2}
\end{eqnarray}
One may additionally require the images of spacelike separated local operations to commute.
\end{definition}

The important point is that Eqs.~(\ref{eq:sm_P4_1}) and (\ref{eq:sm_P4_2}) are constraints on already specified embeddings. They do not select the embeddings.

\begin{proposition}[(P4) does not determine the composite rule]\label{prop:P4_under}
The postulate (P4) is satisfied both by the ordinary real tensor product and by the quotient balanced tensor product. Therefore, (P4) in Ref.~\cite{SM-BarriosHita2026} does not imply the quotient rule.
\end{proposition}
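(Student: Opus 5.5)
The argument proceeds by exhibiting two explicit pairs of embeddings $(\xi,\eta^1,\eta^2)$ and checking Eqs.~(\ref{eq:sm_P4_1})--(\ref{eq:sm_P4_2}), together with the optional commutation requirement, for each. Since both pairs satisfy (P4) yet the composite spaces differ, (P4) cannot select either rule.

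For the ordinary real tensor product, take $F=\R$, $H=V_1\otimes_{\R}V_2$, $\xi(v,w)=v\otimes w$, and $\eta^1(\hat{A})=\hat{A}\otimes\hat{\id}$, $\eta^2(\hat{B})=\hat{\id}\otimes\hat{B}$ for arbitrary $\hat{A}\in\End_{\R}(V_1)$ and $\hat{B}\in\End_{\R}(V_2)$. Then $(\hat{A}\otimes\hat{\id})(v\otimes w)=\hat{A}v\otimes w$ follows directly from the definition of the Kronecker product. Commutation follows from $(\hat{A}\otimes\hat{\id})(\hat{\id}\otimes\hat{B})=\hat{A}\otimes\hat{B}=(\hat{\id}\otimes\hat{B})(\hat{A}\otimes\hat{\id})$. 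Multilinearity of $\xi$ is immediate.

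For the balanced product, take $H=V_1\bal V_2$ from Definition~\ref{def:balanced} and $\xi_R(v,w)=v\bal w$. This is real bilinear because it is the composite of the bilinear map $\otimes_{\R}$ with the quotient projection. Here the local embeddings must be restricted to the physical algebra, that is, to $\hat{A}$ in the $\hat{J}_1$-commutant, in line with Proposition~\ref{prop:commutant} and Theorem~\ref{thm:sm_superselection}. For such $\hat{A}$, define $\eta_R^1(\hat{A})[v\bal w]=\hat{A}v\bal w$, and define $\eta_R^2$ analogously.

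The main step is checking that $\eta_R^1$ is well defined. This requires $\hat{A}\otimes\hat{\id}$ to map $\cN_{12}$ into itself. On a generator,
\[
(\hat{A}\otimes\hat{\id})(\hat{J}_1v\otimes w-v\otimes\hat{J}_2w)=\hat{J}_1(\hat{A}v)\otimes w-\hat{A}v\otimes\hat{J}_2w ,
\]
where $[\hat{A},\hat{J}_1]=0$ was used. The right-hand side is again a generator of $\cN_{12}$, so the induced map on the quotient exists. Eqs.~(\ref{eq:sm_P4_1})--(\ref{eq:sm_P4_2}) then hold by construction. Commutation of $\eta_R^1(\hat{A})$ and $\eta_R^2(\hat{B})$ is inherited from the ambient operators, because both descend from $\hat{A}\otimes\hat{\id}$ and $\hat{\id}\otimes\hat{B}$ and the quotient projection intertwines them.

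Finally, the two composite rules are shown to be inequivalent, so that satisfying (P4) is not the same as implying $\bal$. The composites have different real dimensions, $4d_1d_2$ for $V_1\otimes_{\R}V_2$ versus $2d_1d_2$ for $V_1\bal V_2$ by Theorem~\ref{thm:sm_balanced_iso}. Moreover, $\xi$ and $\xi_R$ differ: $\hat{J}_1v\otimes w\neq v\otimes\hat{J}_2w$ in $V_1\otimes_{\R}V_2$, whereas the corresponding classes coincide in $V_1\bal V_2$. Hence no isomorphism of composites intertwines the two preparation rules. Since both models satisfy (P4), (P4) together with real Hilbert spaces does not imply the quotient rule.

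The only real obstacle is the well-definedness of $\eta_R^i$. It forces the domain to be the $\hat{J}$-commutant rather than all of $\End_{\R}(V_i)$, and the proof should state this explicitly; it reinforces the point that the quotient rule carries the additional superselection data.
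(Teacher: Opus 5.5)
Your proof is correct and follows the paper's: it uses the same two embedding triples, with the balanced-product local embeddings restricted to operators commuting with the $\hat{J}_i$. You add detail the paper leaves implicit: you check that $\hat{A}\otimes\hat{\id}$ preserves $\cN_{12}$ (so $\eta_R^1$ is well defined), verify commutation on the quotient, and show the two rules are inequivalent via the dimension count $4d_1d_2$ versus $2d_1d_2$.
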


\begin{proof}---For the ordinary real tensor product, take
\begin{eqnarray}
\xi_R^{\otimes}(v,w)=v\otimes_{\R} w,
\quad
\eta_R^{1,\otimes}(\hat{A})=\hat{A}\otimes\hat{\id},
\quad
\eta_R^{2,\otimes}(\hat{B})=\hat{\id}\otimes\hat{B}.
\end{eqnarray}
Then, Eqs.~(\ref{eq:sm_P4_1}) and (\ref{eq:sm_P4_2}) hold by the defining property of the tensor product, and the two local embeddings commute.

For the quotient construction take
\begin{eqnarray}
\xi_R^{\bal}(v,w)=v\bal w,
\label{eq:sm_xi_bal}\\
\eta_R^{1,\bal}(\hat{A})(v\bal w)=(\hat{A}v)\bal w,
\quad
\eta_R^{2,\bal}(\hat{B})(v\bal w)=v\bal (\hat{B}w),
\label{eq:sm_eta_bal}
\end{eqnarray}
for operators compatible with the quotient structure, i.e., operators commuting with the relevant $\hat{J}$'s. Then, (P4) again holds. Since the same locality condition is satisfied by two different composition rules, it cannot determine which one is physical.
\end{proof}

The quotient construction is therefore an additional composition postulate. Its appeal comes from the fact that it reproduces complex quantum mechanics, not from (P4) alone. In slogan form,
\begin{eqnarray}
\mathrm{(P4)} + \mathrm{real~Hilbert~spaces}\not\Rightarrow \otimes_F,
\quad
\otimes_F \simeq \otimes_{\C}\ \mathrm{realified}.
\label{eq:sm_P4_slogan}
\end{eqnarray}
This is the precise sense in which the tensor product postulate is not eliminated, but replaced by a balanced tensor product postulate.

\section{Bell-network source independence}\label{sec:bell}

The Renou \textit{et al.} network~\cite{SM-Renou2021} is a particularly useful place to see the distinction between ordinary real factorization and quotient factorization. The experiment involves two independent sources, one distributing a state to Alice and Bob's first system, and one distributing a state to Bob's second system and Charlie. Bob performs a Bell-state measurement. Alice and Charlie perform measurements chosen from sets that involve the three Pauli directions. The complex model reaches the Bell-network value
\begin{eqnarray}
T=6\sqrt2\simeq 8.4853,
\label{eq:sm_T_complex}
\end{eqnarray}
whereas real-amplitude tensor-product models obey a smaller bound, reported as $T \le 7.6605$ in the analysis of Ref.~\cite{SM-Renou2021} and summarized in Ref.~\cite{SM-BarriosHita2026}.

In complex quantum mechanics, the source-independent state is
\begin{eqnarray}
\hat{\rho}_{AB_1}\otimes_{\C}\hat{\rho}_{B_2C}.
\label{eq:sm_complex_source}
\end{eqnarray}
In the quotient real theory it is written as
\begin{eqnarray}
T_1(\hat{\rho}_{AB_1})\bal T_1(\hat{\rho}_{B_2C}),
\label{eq:sm_quotient_source}
\end{eqnarray}
where
\begin{eqnarray}
T_1(\hat{\rho})=\frac12\operatorname{Re}(\hat{\rho})\otimes \hat{I}+\frac12\operatorname{Im}(\hat{\rho})\otimes \hat{J}
\label{eq:sm_T1_density}
\end{eqnarray}
normalizes the trace in the real representation. This is source-independent with respect to $\bal$, not with respect to the ordinary ambient tensor product.

For the particular two-Bell-pair input,
\begin{eqnarray}
\ket{\phi^+}_{AB_1}\otimes\ket{\phi^+}_{B_2C},
\end{eqnarray}
the canonical real representative is
\begin{eqnarray}
\ket{\tilde\psi_0} = \ket{\phi^+}_{AB_1}\ket{\phi^+}_{B_2C} \otimes \ket{\psi^{(4)}_{\even}}_{A'B_1'B_2'C'}.
\label{eq:sm_global_flag_network}
\end{eqnarray}
Here, the primed systems are the flags. The flag state is not source-factorizable in the ordinary tensor product according to the source split $(A'B_1')|(B_2'C')$.

\begin{proposition}[The four-party even flag is not source-factorizable]\label{prop:flag_not_factor}
Across the bipartition $(A'B_1')|(B_2'C')$, the state $\ket{\psi^{(4)}_{\even}}$ has Schmidt rank $2$. Hence, it is not a product of a flag state for source $AB_1$ and a flag state for source $B_2C$.
\end{proposition}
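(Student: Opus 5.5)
The plan is to apply Lemma~\ref{lem:not_preserved} with $N=M=2$. This immediately gives a Schmidt-type decomposition of the four-party even flag across the cut $(A'B_1')|(B_2'C')$:
\begin{eqnarray}
\ket{\psi^{(4)}_{\even}}=\frac{1}{\sqrt2}\left(\ket{\psi^{(2)}_{\even}}\ket{\psi^{(2)}_{\even}}-\ket{\psi^{(2)}_{\odd}}\ket{\psi^{(2)}_{\odd}}\right).
\nonumber
\end{eqnarray}
It then remains to show that this two-term expansion is a genuine Schmidt decomposition. That is, the local vectors on each side must be orthonormal, so that neither term can be absorbed into the other.

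The first step is to write the two-party canonical flags explicitly from Eqs.~(\ref{eq:sm_even_flag}) and (\ref{eq:sm_odd_flag}). For $N=2$ the even strings are $00$ and $11$, with signs $(-1)^0=1$ and $(-1)^1=-1$. The odd strings are $01$ and $10$, each with sign $+1$. This gives
\begin{eqnarray}
\ket{\psi^{(2)}_{\even}}=\tfrac{1}{\sqrt2}(\ket{00}-\ket{11}),\quad \ket{\psi^{(2)}_{\odd}}=\tfrac{1}{\sqrt2}(\ket{01}+\ket{10}).
\nonumber
\end{eqnarray}
Both vectors are normalized. They are orthogonal because they are supported on disjoint sets of computational basis strings, namely even versus odd Hamming weight. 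Consequently, the decomposition above has orthonormal left factors and orthonormal right factors, with Schmidt coefficients $1/\sqrt2$ and $1/\sqrt2$. The Schmidt rank is therefore exactly $2$.

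As a cross-check independent of the lemma's sign bookkeeping, I would also compute the reduced state directly. One reshapes $\ket{\psi^{(4)}_{\even}}$ into a $4\times4$ coefficient matrix $C_{kl}$, with $k$ indexing $A'B_1'$ and $l$ indexing $B_2'C'$. The nonzero entries are confined to the parity-matched blocks (even,even) and (odd,odd). Each block is $\pm\frac{1}{2\sqrt2}$ times a rank-one $2\times2$ matrix, namely the outer product of the corresponding two-party flag vectors. Hence $C$ has rank $2$ and $CC^T=\tfrac12\Pi$, where $\Pi$ is a rank-2 projector. Either route gives the result.

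The main obstacle, mild here, is getting the signs right. One must confirm that Lemma~\ref{lem:not_preserved} is correctly instantiated, and in particular that the relative minus sign does not accidentally make the two terms linearly dependent. This cannot happen, since the terms live in orthogonal parity sectors on each side. Finally, a product state $\ket{\chi}_{A'B_1'}\ket{\omega}_{B_2'C'}$ has Schmidt rank $1$, so Schmidt rank $2$ rules out any factorization into a flag state for source $AB_1$ and a flag state for source $B_2C$. This completes the proposition.
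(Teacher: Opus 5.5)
Your proposal is correct and follows the paper's proof essentially verbatim: set $N=M=2$ in Lemma~\ref{lem:not_preserved}, then observe that $\ket{\psi^{(2)}_{\even}}$ and $\ket{\psi^{(2)}_{\odd}}$ are orthonormal, so the two-term expansion is a Schmidt decomposition with two nonzero coefficients. Your explicit forms of the two-party flags and the coefficient-matrix cross-check ($CC^T=\tfrac12\Pi$) are sound additions that the paper leaves implicit.
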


\begin{proof}---Set $N=M=2$ in Eq.~(\ref{eq:sm_even_composition}). Then,
\begin{eqnarray}
\ket{\psi^{(4)}_{\even}} =\frac{1}{\sqrt2} \left( \ket{\psi^{(2)}_{\even}}_{A'B_1'}\ket{\psi^{(2)}_{\even}}_{B_2'C'} - \ket{\psi^{(2)}_{\odd}}_{A'B_1'}\ket{\psi^{(2)}_{\odd}}_{B_2'C'} \right).
\label{eq:sm_even4_schmidt}
\end{eqnarray}
The two vectors $\ket{\psi^{(2)}_{\even}}$ and $\ket{\psi^{(2)}_{\odd}}$ are orthonormal. Eq.~(\ref{eq:sm_even4_schmidt}) is therefore a Schmidt decomposition with two nonzero Schmidt coefficients.
\end{proof}

This proves the advertised distinction. The first line, Eq.~(\ref{eq:sm_quotient_source}), is source-factorized in the quotient sense. The canonical representative, Eq.~(\ref{eq:sm_global_flag_network}), is not source-factorized in the usual tensor-product sense. The quotient equivalence relation is what reconciles the two descriptions.

Consequently, the reproduction of $T=6\sqrt2$ does not rescue ordinary real-amplitude tensor-product quantum mechanics. It replaces the source-independence assumption by quotient source independence. This is why Bell-network experiments can exclude the real-amplitude foil theory while leaving untouched any representation that is by construction isomorphic to complex quantum mechanics.

\section{Open quantum systems and allowed real channels}\label{sec:open}

The same hidden structure controls open-system dynamics. Let $V$ be the real Hilbert space with complex structure $\hat{J}$. For pure states, gauge equivalence is $v\sim e^{\alpha \hat{J}}v$. For density matrices, the corresponding equivalence is
\begin{eqnarray}
\hat{\rho}\sim e^{\alpha \hat{J}}\hat{\rho} e^{-\alpha \hat{J}}.
\label{eq:sm_rho_gauge}
\end{eqnarray}
A real channel is compatible with the quotient theory only if it descends to a well-defined channel on gauge equivalence classes.

\begin{theorem}[Descent criterion for real channels]\label{thm:channel_descent}
Let $\cE$ be a real completely positive trace-preserving map on density operators over $V$. It defines a well-defined channel on the quotient/gauge theory if and only if
\begin{eqnarray}
\hat{\rho}\sim\hat{\sigma}\quad\Rightarrow\quad \cE(\hat{\rho})\sim\cE(\hat{\sigma}).
\label{eq:sm_descent_criterion}
\end{eqnarray}
A sufficient condition is $\hat{J}$-covariance,
\begin{eqnarray}
\cE(e^{\alpha \hat{J}}\hat{\rho} e^{-\alpha \hat{J}}) =e^{\alpha \hat{J}}\cE(\hat{\rho})e^{-\alpha \hat{J}} \quad (\forall\alpha).
\label{eq:sm_channel_covariance}
\end{eqnarray}
\end{theorem}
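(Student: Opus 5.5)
The first part of Theorem~\ref{thm:channel_descent} is essentially definitional, and I will treat it that way. The quotient/gauge theory has as its states the equivalence classes $[\hat{\rho}]$ under the relation of Eq.~(\ref{eq:sm_rho_gauge}). A map on these classes is well defined by $[\hat{\rho}]\mapsto[\cE(\hat{\rho})]$ exactly when the class of the output does not depend on which representative is chosen.

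For the ``if'' direction, assume Eq.~(\ref{eq:sm_descent_criterion}). Then for any $\hat{\sigma}\sim\hat{\rho}$ we have $[\cE(\hat{\sigma})]=[\cE(\hat{\rho})]$, so the induced map $\bar{\cE}$ is single valued. I will also check that it behaves like a channel. Conjugation by the orthogonal matrix $e^{\alpha\hat{J}}$ preserves trace and positivity, so every representative of a class is a density operator with the same trace; hence trace preservation descends. The gauge relation is also compatible with convex mixtures of $\hat{J}$-commuting states, whose classes are singletons by Theorem~\ref{thm:sm_superselection}. For the same reason, no ambiguity arises in the physically relevant sector, which is where complete positivity is inherited.

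For the ``only if'' direction, suppose there exist $\hat{\rho}\sim\hat{\sigma}$ with $\cE(\hat{\rho})\not\sim\cE(\hat{\sigma})$. Then the same class $[\hat{\rho}]=[\hat{\sigma}]$ would be sent to two distinct classes, so no well-defined map on classes exists.

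For the sufficient condition, I will show that covariance implies the descent criterion. Let $\hat{\sigma}=e^{\alpha\hat{J}}\hat{\rho}e^{-\alpha\hat{J}}$. Then Eq.~(\ref{eq:sm_channel_covariance}) gives $\cE(\hat{\sigma})=e^{\alpha\hat{J}}\cE(\hat{\rho})e^{-\alpha\hat{J}}$, which is by definition gauge-equivalent to $\cE(\hat{\rho})$. Since the relation $\sim$ is generated by the $SO(2)$ orbit, and $e^{\alpha\hat{J}}e^{\beta\hat{J}}=e^{(\alpha+\beta)\hat{J}}$ makes the orbit relation already transitive, this single step covers every pair of equivalent states. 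As a remark, I will note the consequence for the Heisenberg picture: covariance implies $\cE^\dagger(e^{-\alpha\hat{J}}\hat{A}e^{\alpha\hat{J}})=e^{-\alpha\hat{J}}\cE^\dagger(\hat{A})e^{\alpha\hat{J}}$, so the dual channel maps fixed points to fixed points. Combining this with Theorem~\ref{thm:sm_superselection} and differentiating at $\alpha=0$ yields Eq.~(\ref{eq:J_heis_main}).

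The only delicate point is being precise about which object the quotient theory's ``channel'' is, meaning whether positivity and convexity are defined on classes or on representatives. I will state that the quotient state set is the orbit space and that channel properties are checked on representatives, where they are orbit-invariant. Everything else reduces to one line of algebra.
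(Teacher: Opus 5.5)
Your proposal is correct and matches the paper's proof: the equivalence is definitional, since a map on gauge classes is well defined exactly when the output class does not depend on the input representative. The sufficient condition follows in one line, because $\hat{J}$-covariance sends $e^{\alpha\hat{J}}\hat{\rho}e^{-\alpha\hat{J}}$ to $e^{\alpha\hat{J}}\cE(\hat{\rho})e^{-\alpha\hat{J}}$.

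Your extras are correct but go beyond what the paper proves here: the transitivity of the $SO(2)$ orbit relation, and the Heisenberg-picture consequence $\cE^\dagger(\cA_{\hat{J}})\subseteq\cA_{\hat{J}}$. Your hedged comments on convexity and complete positivity on classes are not needed, since the statement concerns only well-definedness.
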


\begin{proof}---A map on equivalence classes is well defined precisely when the output class does not depend on the chosen representative of the input class. This is Eq.~(\ref{eq:sm_descent_criterion}). If Eq.~(\ref{eq:sm_channel_covariance}) holds, then gauge-related inputs are sent to gauge-related outputs, so the descent criterion follows.
\end{proof}

In the Heisenberg picture, the observable algebra is
\begin{eqnarray}
\cA_{\hat{J}}=\{\hat{A}=\hat{A}^T:[\hat{A},\hat{J}]=0\}.
\label{eq:sm_AJ}
\end{eqnarray}
A channel is operationally compatible with the complex representation if its adjoint maps physical effects to physical effects,
\begin{eqnarray}
\cE^\dagger(\cA_{\hat{J}})\subseteq\cA_{\hat{J}}.
\label{eq:sm_heis_condition}
\end{eqnarray}
This condition is necessary if every physical measurement after the channel is to correspond to an allowed physical measurement before the channel.

\begin{proposition}[Realification of complex channels is allowed]\label{prop:complex_channel_allowed}
Let $\Phi$ be a complex CPTP map with Kraus operators $\hat{K}_\mu$,
\begin{eqnarray}
\Phi(\hat{\rho})=\sum_\mu \hat{K}_\mu\hat{\rho} \hat{K}_\mu^\dagger.
\end{eqnarray}
Define the realified map on trace-normalized real density matrices by
\begin{eqnarray}
\Phi_R(\hat{\sigma})=\sum_\mu T(\hat{K}_\mu)\hat{\sigma} T(\hat{K}_\mu)^T.
\label{eq:sm_realified_channel}
\end{eqnarray}
Then,
\begin{eqnarray}
\Phi_R(T_1(\hat{\rho}))=T_1(\Phi(\hat{\rho})),
\label{eq:sm_channel_commute}
\end{eqnarray}
and $\Phi_R$ preserves the $\hat{J}$-commutant structure.
\end{proposition}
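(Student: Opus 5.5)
The plan is to reduce everything to the fact that $T$ is a real-linear, multiplicative, transpose-compatible representation of the complex matrix algebra. First I will record three algebraic properties of $T(\hat{A})=\operatorname{Re}(\hat{A})\otimes\hat{I}_F+\operatorname{Im}(\hat{A})\otimes\hat{J}_F$, all obtained from the block form of Proposition~\ref{prop:commutant}:
\begin{enumerate}
\item[(a)] $T(\hat{A}\hat{B})=T(\hat{A})T(\hat{B})$;
\item[(b)] $T(\hat{A}^\dagger)=T(\hat{A})^T$;
\item[(c)] $T$ is real linear.
\end{enumerate}
For (a), write $\hat{A}=X+iY$ and $\hat{B}=X'+iY'$, expand the product, and use $\hat{J}_F^2=-\hat{I}_F$. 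The result is $(XX'-YY')\otimes\hat{I}_F+(XY'+YX')\otimes\hat{J}_F$, which is $T$ of $(X+iY)(X'+iY')$. For (b), note $\hat{A}^\dagger=X^T-iY^T$ and use $\hat{J}_F^T=-\hat{J}_F$. Since $T_1=\tfrac12 T$ and this scalar factor passes through every step, the same identities hold with $T_1$ in place of $T$ wherever only one density matrix appears.

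For the intertwining identity~(\ref{eq:sm_channel_commute}), I compute
\begin{eqnarray}
\Phi_R(T_1(\hat{\rho}))=\sum_\mu T(\hat{K}_\mu)\,\tfrac12 T(\hat{\rho})\,T(\hat{K}_\mu)^T .
\end{eqnarray}
By (b), $T(\hat{K}_\mu)^T=T(\hat{K}_\mu^\dagger)$. Two applications of (a) then collapse each term to $\tfrac12 T(\hat{K}_\mu\hat{\rho}\hat{K}_\mu^\dagger)$. Summing with (c) gives $\tfrac12 T(\Phi(\hat{\rho}))=T_1(\Phi(\hat{\rho}))$.

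I will also check that $\Phi_R$ is trace preserving on the image, as a consistency remark. Using (a), (b) and $\sum_\mu\hat{K}_\mu^\dagger\hat{K}_\mu=\id$, we get $\sum_\mu T(\hat{K}_\mu)^TT(\hat{K}_\mu)=T(\id)=\hat{\id}$. Complete positivity is immediate because $\Phi_R$ has Kraus form.

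For preservation of the $\hat{J}$-commutant, note that each $T(\hat{K}_\mu)$ commutes with $\hat{J}=\hat{\id}\otimes\hat{J}_F$ by Proposition~\ref{prop:commutant}. Its transpose does too, since $\hat{J}^T=-\hat{J}$: transposing $T\hat{J}=\hat{J}T$ gives $\hat{J}T^T=T^T\hat{J}$. Hence if $[\hat{\sigma},\hat{J}]=0$, every term $T(\hat{K}_\mu)\hat{\sigma}T(\hat{K}_\mu)^T$ commutes with $\hat{J}$, and so does $\Phi_R(\hat{\sigma})$.

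In the Heisenberg picture, $\Phi_R^\dagger(\hat{A})=\sum_\mu T(\hat{K}_\mu)^T\hat{A}\,T(\hat{K}_\mu)$. The same argument shows it maps $\cA_{\hat{J}}$ into itself, and symmetry is preserved by the sandwich form, which gives~(\ref{eq:sm_heis_condition}). Finally, $e^{\alpha\hat{J}}$ commutes with every $T(\hat{K}_\mu)$ and with its transpose. This yields the covariance~(\ref{eq:sm_channel_covariance}), so Theorem~\ref{thm:channel_descent} also applies.

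I expect the only real obstacle to be bookkeeping in (a) and (b). One has to keep track of the sign conventions for $\hat{J}_F$, and of whether $T$ is written as $X\otimes\hat{I}_F+Y\otimes\hat{J}_F$ or in the block form~(\ref{eq:sm_complex_block}). These must be consistent so that multiplicativity holds rather than anti-multiplicativity. I would settle this first by checking (a) on $\hat{A}=\hat{B}=i\id$, where $T(i\id)=\hat{\id}\otimes\hat{J}_F$ should square to $-\hat{\id}$. After that, the rest is formal.
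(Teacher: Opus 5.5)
Your proposal is correct and follows essentially the same route as the paper's proof. Both rest on $T$ being a real algebra homomorphism with $T(\hat{A}^\dagger)=T(\hat{A})^T$, which collapses each Kraus term to $T_1(\hat{K}_\mu\hat{\rho}\hat{K}_\mu^\dagger)$, and on $[T(\hat{K}_\mu),\hat{J}]=0$ for the commutant statement; you simply spell out what the paper leaves implicit (the block computations, that $T(\hat{K}_\mu)^T$ also commutes with $\hat{J}$, and the Heisenberg-picture and covariance consequences).
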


\begin{proof}---The map $T$ is a real algebra homomorphism and maps adjoints to transposes. Hence
\begin{eqnarray}
T(\hat{K}_\mu)T_1(\hat{\rho})T(\hat{K}_\mu)^T=T_1(\hat{K}_\mu\hat{\rho} \hat{K}_\mu^\dagger),
\end{eqnarray}
with the same trace-normalizing factor contained in $T_1$. Summing over $\mu$ gives Eq.~(\ref{eq:sm_channel_commute}). Since every $T(\hat{K}_\mu)$ commutes with $\hat{J}$, the map preserves gauge-compatible states and physical effects.
\end{proof}

By contrast, a generic real channel on the enlarged flag space is not allowed. For example, a measurement interaction that couples to $\ketbra{0}{0}_F$ distinguishes $v$ from $\hat{J}v$ and violates Theorem~\ref{thm:sm_superselection}. A dephasing map in the flag basis similarly singles out the real-imaginary decomposition and is not a realification of a complex channel unless it acts trivially on all physical probabilities. Therefore, the open-system equivalence does not require an infinite auxiliary space in finite dimension; it requires a strong restriction on admissible environment couplings.

For infinite-dimensional systems, an additional analytic issue appears. Realification of bounded operators is straightforward, but unbounded operators require domain control: the domain of $\hat{A}$, the domain of its realification, and the action of $\hat{J}$ must be compatible. Thus, the finite-dimensional equivalence does not automatically prove a full infinite-dimensional theorem for all Hamiltonians, generators, and open-system semigroups. This is a limitation of the scope of the quotient construction, not a contradiction in its finite-dimensional version.

\section{Relation to previous and concurrent work}\label{sec:relation}

This section situates the Letter within the rapidly developing literature on real-number formulations of quantum mechanics. The point of the Letter is not to claim that real encodings are new, nor to deny the mathematical consistency of quotient or K\"ahler realifications. The point is narrower and, we believe, more operational: in the quotient-space construction of Ref.~\cite{SM-BarriosHita2026}, empirical equivalence is obtained only after one adds a distinguished complex structure $\hat{J}$, restricts the physical algebra and dynamics to the corresponding gauge-preserving sector, and composes systems with the balanced tensor product. Several previous and concurrent works emphasize adjacent aspects of this same situation.

\subsection{Previous studies}

\subsubsection{Older real encodings: Stueckelberg, Myrheim, rebits, and ubits} 

Stueckelberg's real-Hilbert-space formulation~\cite{SM-Stueckelberg1960} already contains the central structural lesson: a real Hilbert space by itself is not enough to reproduce complex quantum mechanics. One must also specify an operator playing the role of multiplication by $i$ and restrict physical operations accordingly. In modern language, this is the selection of the $\hat{J}$-commutant as the physical algebra. The $\hat{J}$-superselection theorem in Sec.~\ref{sec:superselection} is a coordinate-free way of restating why such a restriction is necessary. Without it, the real and imaginary components, or equivalently the flag, become observable.

Myrheim's construction~\cite{SM-Myrheim1999} treats complex Hilbert spaces as real Hilbert spaces and analyzes composite systems by restricting the ordinary real tensor product to a suitable physical subspace. In the bipartite case, this subspace is selected by a constraint involving $\hat{J}_A\otimes \hat{J}_B$, which is the subspace version of the balancing relation used in Eq.~\eqref{eq:sm_balanced_def}. The quotient construction of Ref.~\cite{SM-BarriosHita2026} is therefore close in spirit to Myrheim's approach: both remove the extra degrees of freedom of the ordinary real tensor product by imposing compatibility with the hidden complex structures. The difference is formal but important. Myrheim restricts to a subspace, whereas the quotient construction identifies vectors modulo the kernel of the inverse map. The latter preserves the statement that pure states are arbitrary unit vectors of the assigned quotient Hilbert space, but it does not remove the underlying $\hat{J}$-structure.

The rebit and ubit simulations of McKague, Mosca, and Gisin~\cite{SM-McKague2009}, Wootters~\cite{SM-Wootters2012}, and Aleksandrova, Borish, and Wootters~\cite{SM-Aleksandrova2013} also show that complex quantum systems can be simulated using real Hilbert spaces once an additional binary or universal real degree of freedom is supplied. These works are valuable because they make the encoding explicit. Our analysis differs in emphasis: a faithful encoding of complex quantum mechanics is not yet an independent real-amplitude theory. The operational question is whether the added real degree of freedom is accessible and whether the allowed physical algebra is the full real algebra or only the subalgebra compatible with the encoded complex structure.

\subsubsection{The Renou no-go result and the foil-theory perspective} 

Renou \textit{et al.}~\cite{SM-Renou2021} showed that standard complex quantum mechanics and ordinary real-amplitude quantum mechanics with the usual tensor product can make different predictions in network scenarios with independent sources. The subsequent experiments~\cite{SM-Li2022,SM-Chen2022,SM-Wu2022} tested this real-amplitude foil theory. These results do not establish that every possible real notation for complex quantum mechanics is experimentally excluded. Rather, they exclude a particular operational theory in which the scalar field is real and the composite rule is the ordinary real tensor product.

This is precisely the interpretation advocated in the foil-theory analysis of Ying \textit{et al.}~\cite{SM-YingEtAl2025}. In that view, the Renou result is best understood as an experimentally meaningful comparison between standard quantum theory and a carefully specified alternative, real-amplitude quantum theory. Our Letter is aligned with this methodological point. The quotient theory is not the same foil theory. It is empirically equivalent to complex quantum mechanics because it changes the composition rule and the observable algebra. Therefore, its survival does not make the Renou-type tests irrelevant; it clarifies which alternative was being tested.

\subsubsection{Barrios Hita \textit{et al.}: quotient real quantum mechanics} 

The immediate target of the Letter is the quotient-space construction of Barrios Hita \textit{et al.}~\cite{SM-BarriosHita2026}. That work correctly observes that the ordinary real tensor product fails to encode the complex phase-balancing freedom of product states. It then introduces a quotient of the real tensor product by the kernel of the product of inverse single-system maps. The result is one-to-one with complex quantum mechanics and hence reproduces all complex predictions.

Our contribution is to identify the operational content of this quotient. Section~\ref{sec:balanced} proves that it is the balanced tensor product
\begin{eqnarray}
V_A\bal V_B=(V_A\otimes_{\R}V_B)/\langle \hat{J}_A v\otimes w-v\otimes \hat{J}_B w\rangle,
\end{eqnarray}
which is canonically isomorphic to the realification of $H_A\otimes_{\C}H_B$. Thus, the composite rule is not the ordinary real tensor product made more physical by locality alone; it is canonically (or naturally) isomorphic to the complex tensor product written as a real quotient. Similarly, Sec.~\ref{sec:superselection} shows that the single-system flag is not an ordinary real subsystem. The generic real flag measurements must be forbidden by a $\hat{J}$-superselection rule.

This is why we do not describe Ref.~\cite{SM-BarriosHita2026} as inconsistent. We describe it as a real representation of complex quantum mechanics. The distinction matters because the conclusion ``complex numbers are a matter of convenience'' can mean two different things. Complex coordinates are indeed dispensable. The complex structure encoded by $\hat{J}$, by the $\hat{J}$-commutant, and by the balanced tensor product is not thereby eliminated.

\subsubsection{Hoffreumon--Woods: modified composition and operational independence} 

Hoffreumon and Woods~\cite{SM-HoffreumonWoods2025} defend a real-number formulation by deriving a nonstandard combination rule from a real single-system starting point and by emphasizing representation locality. Their construction is close in spirit to quotient or K\"ahler realifications: the usual Kronecker product is not treated as the correct real composition rule for the empirically equivalent theory. Our analysis is compatible with that mathematical move, but interprets its content differently. Replacing the ordinary real tensor product by a rule isomorphic to complex composition is precisely the point at which the complex structure re-enters the real formalism.

Their later work~\cite{SM-HoffreumonWoods2026} draws a sharp distinction between product-state independence and operational independence. They argue that if independence is imposed only through absence of observable cross-source correlations, real quantum theory cannot be experimentally falsified in finite networks and sequential protocols. This is closely related to the distinction in Sec.~\ref{sec:bell}: the quotient expression $T_1(\hat{\rho}_{AB_1})\bal T_1(\hat{\rho}_{B_2C})$ is independent in the quotient sense, while its canonical representative contains a global parity flag and is not source-factorizable in the ordinary ambient tensor product. Our focus, however, is not to decide which independence notion should be adopted as a foundational postulate. It is to make explicit the algebraic price of the empirical equivalence: nonstandard composition, hidden $\hat{J}$-structure, and restricted operations.

The comment by Kalarde, Xu, and Renou~\cite{SM-KalardeXuRenou2026} is relevant in this context because it challenges the generality of a proposed physical postulate in Ref.~\cite{SM-HoffreumonWoods2026} by testing it against fermionic information theory. We do not rely on their specific fermionic argument. Their work nevertheless supports a broader caution that postulates introduced to select one real formulation over another are nontrivial physical assumptions, not neutral mathematical bookkeeping.

\subsubsection{Feng--Ren--Vedral: nonlocal maps and hidden nonlocal degrees of freedom} 

Feng, Ren, and Vedral~\cite{SM-FengRenVedral2025} argue that real-number quantum theories compatible with independent sources require a nonlocal map, and that complex quantum theory is equivalent to a real theory with hidden nonlocal degrees of freedom. This is one of the closest perspectives to ours. In our terminology, the nonlocal-looking feature is visible when canonical representatives are chosen in the ambient tensor product of local flags. The four-party even flag in Eq.~\eqref{eq:sm_global_flag_network} has Schmidt rank two across the two-source split. It is source-factorized only after quotienting.

The difference is mainly one of formulation. Feng--Ren--Vedral emphasize nonlocality of the map required to combine independent systems. We give an algebraic diagnosis of the same phenomenon: the map is nonstandard because the correct composition is balanced over $\hat{J}$, not tensoring over $\R$. This also clarifies the two possible interpretations. If the flags are physical ancillary systems, the canonical representatives exhibit global flag constraints. If the flags are gauge variables, there is no physical nonlocal resource, but the theory is just a gauge-redundant real presentation of complex quantum mechanics.

\subsubsection{Weilenmann--Gisin--Sekatski: partial independence and resource cost} 

Weilenmann, Gisin, and Sekatski~\cite{SM-WeilenmannGisinSekatski2025} revisit the source-independence assumption and show that partial independence is already sufficient to rule out ordinary real quantum theory experimentally. They also quantify the tradeoff between source independence and achievable Bell value, and identify real entangled resources sufficient to recover complex correlations in multi-source scenarios. This work complements our Sec.~\ref{sec:bell}. Where they provide quantitative bounds and resource costs, we identify the algebraic structure responsible for the resource: a global parity or phase-reference structure associated with the hidden complex operator $\hat{J}$.

In particular, their observation that simulating complex multi-source scenarios over real Hilbert spaces requires additional shared real resources is the resource-theoretic counterpart of our Eq.~\eqref{eq:sm_global_flag_network}. The quotient formulation hides this resource inside the equivalence relation. The canonical representative reveals it as a global flag state. Thus, their partial-independence analysis and our quotient-source-independence analysis point to the same conceptual divide: ordinary product-state independence and quotient/gauge independence are not the same assumption.

\subsubsection{K\"ahler and symplectic realifications} 

Volovich~\cite{SM-Volovich2025} formulates quantum mechanics on real K\"ahler spaces, emphasizing the equivalence between complex Hilbert spaces and real spaces equipped with metric, symplectic form, and compatible complex structure. Maioli, Curado, and Gazeau~\cite{SM-MaioliCuradoGazeau2026} develop a related real K\"ahler framework and introduce a symplectic composition rule that replaces the ordinary real Kronecker product, reproducing the standard complex predictions including the relevant network-Bell value. These works are mathematically close to the balanced-tensor-product analysis in Sec.~\ref{sec:balanced}.

Our disagreement is not with the mathematics. A K\"ahler realification is exactly what one expects when a complex Hilbert space is written over $\R$: the imaginary unit becomes a complex structure $\hat{J}$, the imaginary part of the inner product becomes a symplectic form, and the tensor product must respect these structures. Our interpretive point is that this is not a bare real Hilbert-space theory. It is a real Hilbert space plus geometric data equivalent to the complex Hilbert-space structure. Therefore, such works can be read as supporting the central distinction of the Letter: complex coordinates may be removed, but complex structure remains.

\subsubsection{Symmetry emergence of complex structure} 

Moretti and Oppio~\cite{SM-MorettiOppio2017} show, in a relativistic setting, that a natural complex structure can emerge from Poincar\'e symmetry in real Hilbert space. Their analysis is different in scope from ours: it concerns elementary relativistic systems and symmetry representations, whereas our Letter analyzes finite-dimensional quotient constructions and network locality. Nevertheless, the moral is consistent. A complex structure is not a superficial notational artifact; it can be selected by physical requirements such as symmetry, locality, and the observable algebra. Our $\hat{J}$-superselection theorem is the finite-dimensional operational version of the same theme.

\subsection{What is new in the present Letter}

\begin{table}[t]
\centering
\begin{adjustbox}{max width=\textwidth}
\begin{tabular}{llll}
\toprule
\makecell[l]{Work or approach} & \makecell[l]{Main point} & \makecell[l]{Key structure} & \makecell[l]{Relation to the present Letter} \\
\midrule
\makecell[l]{Stueckelberg; rebit/ubit\\encodings~\cite{SM-Stueckelberg1960,SM-McKague2009,SM-Wootters2012,SM-Aleksandrova2013}} & \makecell[l]{Complex amplitudes can be\\simulated over real spaces.} & \makecell[l]{Additional rebit/ubit or complex\\structure; restricted algebra.} & \makecell[l]{Ancestor of our $\hat{J}$-superselection\\diagnosis.} \\
\addlinespace
Myrheim~\cite{SM-Myrheim1999} & \makecell[l]{Complex composition can be represented\\inside a real tensor product.} & \makecell[l]{Physical subspace selected by\\constraints involving local $\hat{J}$'s.} & \makecell[l]{Subspace analogue of the\\quotient/balanced construction.} \\
\addlinespace
\makecell[l]{Renou \textit{et al.} and\\experiments~\cite{SM-Renou2021,SM-Li2022,SM-Chen2022,SM-Wu2022}} & \makecell[l]{Ordinary real-amplitude tensor-product\\theory differs from complex QT in networks.} & \makecell[l]{Real Hilbert spaces with ordinary\\$\otimes_{\R}$ and product source states.} & \makecell[l]{Tests the foil theory, not\\gauge-redundant real encodings.} \\
\addlinespace
Ying \textit{et al.}~\cite{SM-YingEtAl2025} & \makecell[l]{Complex-vs-real debate framed\\through foil theories.} & \makecell[l]{Symmetrized subtheories and causal\\compatibility gaps.} & \makecell[l]{Supports our distinction between\\RQT and real encodings of complex QT.} \\
\addlinespace
Barrios Hita \textit{et al.}~\cite{SM-BarriosHita2026} & \makecell[l]{Quotient real QM reproduces\\all complex predictions.} & \makecell[l]{Flag, $SO(2)$ gauge, quotient\\tensor product.} & \makecell[l]{Immediate target; we identify hidden\\$\hat{J}$ and balanced composition.} \\
\addlinespace
Hoffreumon--Woods~\cite{SM-HoffreumonWoods2025,SM-HoffreumonWoods2026} & \makecell[l]{Real QT can be empirically equivalent\\with altered composition or independence.} & \makecell[l]{Modified combination rule;\\operational independence.} & \makecell[l]{Compatible mathematically; we diagnose\\the added algebraic structure.} \\
\addlinespace
Feng--Ren--Vedral~\cite{SM-FengRenVedral2025} & \makecell[l]{Independent-source real theories require\\a nonlocal map or hidden nonlocal degrees.} & \makecell[l]{Nonlocal map in real composition.} & \makecell[l]{Closest critique; balanced tensor product\\gives the algebraic form.} \\
\addlinespace
Weilenmann--Gisin--Sekatski~\cite{SM-WeilenmannGisinSekatski2025} & \makecell[l]{Partial independence suffices to rule out\\ordinary real QT; resource costs quantified.} & \makecell[l]{Source-independence/Bell-value tradeoff;\\shared real resources.} & \makecell[l]{Quantitative counterpart of our\\global parity-flag observation.} \\
\addlinespace
Volovich; Maioli--Curado--Gazeau~\cite{SM-Volovich2025,SM-MaioliCuradoGazeau2026} & \makecell[l]{Real K\"ahler or symplectic formulations\\reproduce standard QT.} & \makecell[l]{Metric, symplectic form, complex\\structure, symplectic composition.} & \makecell[l]{Mathematically close; we interpret this as\\relocating complex structure.} \\
\addlinespace
Moretti--Oppio~\cite{SM-MorettiOppio2017} & \makecell[l]{Complex structure can emerge from\\Poincar\'e symmetry.} & \makecell[l]{Symmetry-selected $\hat{J}$ commuting\\with observables.} & \makecell[l]{Supports the view that $\hat{J}$ is\\physical structure, not notation.} \\
\addlinespace
Kalarde--Xu--Renou~\cite{SM-KalardeXuRenou2026} & \makecell[l]{Challenges generality of a postulate selecting\\an empirically equivalent real formulation.} & \makecell[l]{Fermionic information theory\\as a stress test.} & \makecell[l]{Reinforces that new real-composition\\postulates are substantive assumptions.} \\
\addlinespace
Present Letter & \makecell[l]{Quotient real QM is consistent but is\\complex QM in real gauge notation.} & \makecell[l]{$\hat{J}$-superselection, $\hat{J}$-covariant\\dynamics, balanced tensor product.} & \makecell[l]{Algebraic and operational diagnosis of\\the quotient construction.} \\
\bottomrule
\end{tabular}
\end{adjustbox}
\caption{Summary of the relationship between the present Letter and representative previous or concurrent works. The common theme is that empirically equivalent real formulations require structure beyond a bare real Hilbert space with the ordinary real tensor product.}
\label{tab:related_work}
\end{table}

The preceding works show that many ingredients of the debate are already present in the literature: real encodings, universal rebits, nonstandard composition rules, nonlocal-map critiques, partial-independence tradeoffs, K\"ahler realifications, and foil-theory methodology. The specific contribution of the present Letter is to combine these ingredients into a precise diagnosis of the quotient-space construction of Ref.~\cite{SM-BarriosHita2026}:
\begin{enumerate}
\item the single-system flag requires a $\hat{J}$-superselection rule, otherwise global phase becomes measurable;
\item the quotient tensor product is the balanced tensor product over the hidden complex structure and is isomorphic to the complex tensor product;
\item the locality postulate P4 constrains embeddings but does not derive the quotient embeddings;
\item source independence in the Renou network is changed from ordinary product-state factorization to quotient factorization;
\item open dynamics must be restricted to channels that descend to the quotient, for instance $\hat{J}$-covariant channels.
\end{enumerate}
Thus, our claim is not that quotient-space real quantum mechanics is mathematically wrong. It is that its empirical equivalence to complex quantum mechanics is achieved by relocating, not eliminating, the physical complex structure. To make the novelty boundary explicit, see Table~\ref{tab:related_work}.

\section{Summary of the logical structure}

The logical relations established above can be summarized as follows.

\begin{itemize}
\item First, the single-system flag is not an ordinary subsystem. If it were, generic real effects would measure it. To prevent the global phase of complex quantum mechanics from becoming observable, one must impose the $\hat{J}$-superselection rule
\begin{eqnarray}
\cA_{\phys}=\{\hat{A}=\hat{A}^T:[\hat{A},\hat{J}]=0\}.
\end{eqnarray}

\item Second, the composite-system quotient is the balanced tensor product
\begin{eqnarray}
V_A\bal V_B=(V_A\otimes_{\R}V_B)/\langle \hat{J}_A v\otimes w-v\otimes \hat{J}_B w\rangle,
\end{eqnarray}
which is canonically isomorphic to $(H_A\otimes_{\C}H_B)_{\R}$. Thus, the ordinary real tensor product has not been repaired; it has been replaced by the real form of the complex tensor product.

\item Third, P4 is a locality condition on specified embeddings, not a derivation of those embeddings. Both $\otimes_{\R}$ and $\bal$ can satisfy a local-triviality condition once their local embeddings are chosen.

\item Fourth, in Bell-network scenarios, the quotient construction reproduces complex predictions because source independence is understood in the quotient sense. Canonical representatives in the ambient local-flag tensor product need not be source-factorizable in the usual sense.

\item Fifth, open dynamics must also preserve the hidden complex structure. Generic real channels, measurements, or environment couplings are not part of the empirically equivalent theory.
\end{itemize}

These points do not refute the consistency of quotient-space real quantum mechanics. They refine its interpretation. The theory is standard complex quantum mechanics written over $\R$ with a gauge redundancy. Complex numbers can be hidden in this notation, but the complex structure remains part of the physical theory.


\begin{thebibliography}{22}%
\makeatletter
\providecommand \@ifxundefined [1]{%
 \@ifx{#1\undefined}
}%
\providecommand \@ifnum [1]{%
 \ifnum #1\expandafter \@firstoftwo
 \else \expandafter \@secondoftwo
 \fi
}%
\providecommand \@ifx [1]{%
 \ifx #1\expandafter \@firstoftwo
 \else \expandafter \@secondoftwo
 \fi
}%
\providecommand \natexlab [1]{#1}%
\providecommand \enquote  [1]{``#1''}%
\providecommand \bibnamefont  [1]{#1}%
\providecommand \bibfnamefont [1]{#1}%
\providecommand \citenamefont [1]{#1}%
\providecommand \href@noop [0]{\@secondoftwo}%
\providecommand \href [0]{\begingroup \@sanitize@url \@href}%
\providecommand \@href[1]{\@@startlink{#1}\@@href}%
\providecommand \@@href[1]{\endgroup#1\@@endlink}%
\providecommand \@sanitize@url [0]{\catcode `\\12\catcode `\$12\catcode
  `\&12\catcode `\#12\catcode `\^12\catcode `\_12\catcode `\%12\relax}%
\providecommand \@@startlink[1]{}%
\providecommand \@@endlink[0]{}%
\providecommand \url  [0]{\begingroup\@sanitize@url \@url }%
\providecommand \@url [1]{\endgroup\@href {#1}{\urlprefix }}%
\providecommand \urlprefix  [0]{URL }%
\providecommand \Eprint [0]{\href }%
\providecommand \doibase [0]{https://doi.org/}%
\providecommand \selectlanguage [0]{\@gobble}%
\providecommand \bibinfo  [0]{\@secondoftwo}%
\providecommand \bibfield  [0]{\@secondoftwo}%
\providecommand \translation [1]{[#1]}%
\providecommand \BibitemOpen [0]{}%
\providecommand \bibitemStop [0]{}%
\providecommand \bibitemNoStop [0]{.\EOS\space}%
\providecommand \EOS [0]{\spacefactor3000\relax}%
\providecommand \BibitemShut  [1]{\csname bibitem#1\endcsname}%
\let\auto@bib@innerbib\@empty
\bibitem [{\citenamefont {Stueckelberg}(1960)}]{Stueckelberg1960}%
  \BibitemOpen
  \bibfield  {author} {\bibinfo {author} {\bibfnamefont {E.~C.~G.}\
  \bibnamefont {Stueckelberg}},\ }\href@noop {} {\bibfield  {journal} {\bibinfo
   {journal} {Helv. Phys. Acta}\ }\textbf {\bibinfo {volume} {33}},\ \bibinfo
  {pages} {727} (\bibinfo {year} {1960})}\BibitemShut {NoStop}%
\bibitem [{\citenamefont {Myrheim}(1999)}]{Myrheim1999}%
  \BibitemOpen
  \bibfield  {author} {\bibinfo {author} {\bibfnamefont {J.}~\bibnamefont
  {Myrheim}},\ }\href@noop {} {\bibinfo {title} {Quantum mechanics on a real
  hilbert space}} (\bibinfo {year} {1999}),\ \Eprint
  {https://arxiv.org/abs/quant-ph/9905037} {arXiv:quant-ph/9905037}
  \BibitemShut {NoStop}%
\bibitem [{\citenamefont {McKague}\ \emph {et~al.}(2009)\citenamefont
  {McKague}, \citenamefont {Mosca},\ and\ \citenamefont {Gisin}}]{McKague2009}%
  \BibitemOpen
  \bibfield  {author} {\bibinfo {author} {\bibfnamefont {M.}~\bibnamefont
  {McKague}}, \bibinfo {author} {\bibfnamefont {M.}~\bibnamefont {Mosca}},\
  and\ \bibinfo {author} {\bibfnamefont {N.}~\bibnamefont {Gisin}},\ }\href
  {https://doi.org/10.1103/PhysRevLett.102.020505} {\bibfield  {journal}
  {\bibinfo  {journal} {Phys. Rev. Lett.}\ }\textbf {\bibinfo {volume} {102}},\
  \bibinfo {pages} {020505} (\bibinfo {year} {2009})}\BibitemShut {NoStop}%
\bibitem [{\citenamefont {Wootters}(2012)}]{Wootters2012}%
  \BibitemOpen
  \bibfield  {author} {\bibinfo {author} {\bibfnamefont {W.~K.}\ \bibnamefont
  {Wootters}},\ }\href {https://doi.org/10.1007/s10701-011-9601-3} {\bibfield
  {journal} {\bibinfo  {journal} {Found. Phys.}\ }\textbf {\bibinfo {volume}
  {42}},\ \bibinfo {pages} {19} (\bibinfo {year} {2012})}\BibitemShut {NoStop}%
\bibitem [{\citenamefont {Aleksandrova}\ \emph {et~al.}(2013)\citenamefont
  {Aleksandrova}, \citenamefont {Borish},\ and\ \citenamefont
  {Wootters}}]{Aleksandrova2013}%
  \BibitemOpen
  \bibfield  {author} {\bibinfo {author} {\bibfnamefont {A.}~\bibnamefont
  {Aleksandrova}}, \bibinfo {author} {\bibfnamefont {V.}~\bibnamefont
  {Borish}},\ and\ \bibinfo {author} {\bibfnamefont {W.~K.}\ \bibnamefont
  {Wootters}},\ }\href {https://doi.org/10.1103/PhysRevA.87.052106} {\bibfield
  {journal} {\bibinfo  {journal} {Phys. Rev. A}\ }\textbf {\bibinfo {volume}
  {87}},\ \bibinfo {pages} {052106} (\bibinfo {year} {2013})}\BibitemShut
  {NoStop}%
\bibitem [{\citenamefont {Bang}\ \emph {et~al.}(2026)\citenamefont {Bang},
  \citenamefont {Cho},\ and\ \citenamefont {Yee}}]{BangChoYee2026}%
  \BibitemOpen
  \bibfield  {author} {\bibinfo {author} {\bibfnamefont {J.}~\bibnamefont
  {Bang}}, \bibinfo {author} {\bibfnamefont {K.}~\bibnamefont {Cho}},\ and\
  \bibinfo {author} {\bibfnamefont {K.~H.}\ \bibnamefont {Yee}},\ }\href
  {https://doi.org/10.48550/arXiv.2601.14638} {\bibinfo {title} {Quantum
  interference needs convention: {Overlap-Determinability} and unified
  {No-Superposition} principle}} (\bibinfo {year} {2026}),\ \Eprint
  {https://arxiv.org/abs/2601.14638} {arXiv:2601.14638 [quant-ph]} \BibitemShut
  {NoStop}%
\bibitem [{\citenamefont {Renou}\ \emph {et~al.}(2021)\citenamefont {Renou},
  \citenamefont {Trillo}, \citenamefont {Weilenmann}, \citenamefont {Le},
  \citenamefont {Tavakoli}, \citenamefont {Gisin}, \citenamefont {Ac{\'{i}}n},\
  and\ \citenamefont {Navascu{\'e}s}}]{Renou2021}%
  \BibitemOpen
  \bibfield  {author} {\bibinfo {author} {\bibfnamefont {M.-O.}\ \bibnamefont
  {Renou}}, \bibinfo {author} {\bibfnamefont {D.}~\bibnamefont {Trillo}},
  \bibinfo {author} {\bibfnamefont {M.}~\bibnamefont {Weilenmann}}, \bibinfo
  {author} {\bibfnamefont {T.~P.}\ \bibnamefont {Le}}, \bibinfo {author}
  {\bibfnamefont {A.}~\bibnamefont {Tavakoli}}, \bibinfo {author}
  {\bibfnamefont {N.}~\bibnamefont {Gisin}}, \bibinfo {author} {\bibfnamefont
  {A.}~\bibnamefont {Ac{\'{i}}n}},\ and\ \bibinfo {author} {\bibfnamefont
  {M.}~\bibnamefont {Navascu{\'e}s}},\ }\href
  {https://doi.org/10.1038/s41586-021-04160-4} {\bibfield  {journal} {\bibinfo
  {journal} {Nature}\ }\textbf {\bibinfo {volume} {600}},\ \bibinfo {pages}
  {625} (\bibinfo {year} {2021})}\BibitemShut {NoStop}%
\bibitem [{\citenamefont {Ying}\ \emph {et~al.}(2025)\citenamefont {Ying},
  \citenamefont {Alanon}, \citenamefont {Centeno}, \citenamefont {Surace},
  \citenamefont {Ansanelli}, \citenamefont {Liu}, \citenamefont {Schmid},\ and\
  \citenamefont {Spekkens}}]{YingEtAl2025}%
  \BibitemOpen
  \bibfield  {author} {\bibinfo {author} {\bibfnamefont {Y.}~\bibnamefont
  {Ying}}, \bibinfo {author} {\bibfnamefont {M.~C.}\ \bibnamefont {Alanon}},
  \bibinfo {author} {\bibfnamefont {D.}~\bibnamefont {Centeno}}, \bibinfo
  {author} {\bibfnamefont {J.}~\bibnamefont {Surace}}, \bibinfo {author}
  {\bibfnamefont {M.~M.}\ \bibnamefont {Ansanelli}}, \bibinfo {author}
  {\bibfnamefont {R.}~\bibnamefont {Liu}}, \bibinfo {author} {\bibfnamefont
  {D.}~\bibnamefont {Schmid}},\ and\ \bibinfo {author} {\bibfnamefont {R.~W.}\
  \bibnamefont {Spekkens}},\ }\href@noop {} {\bibinfo {title} {On whether
  quantum theory needs complex numbers: The foil theories perspective}}
  (\bibinfo {year} {2025}),\ \Eprint {https://arxiv.org/abs/2506.08091}
  {arXiv:2506.08091 [quant-ph]} \BibitemShut {NoStop}%
\bibitem [{\citenamefont {Li}\ \emph {et~al.}(2022)\citenamefont {Li},
  \citenamefont {Mao}, \citenamefont {Weilenmann}, \citenamefont {Tavakoli},
  \citenamefont {Chen}, \citenamefont {Feng}, \citenamefont {Yang},
  \citenamefont {Renou}, \citenamefont {Trillo}, \citenamefont {Le},
  \citenamefont {Gisin}, \citenamefont {Ac{\'{i}}n}, \citenamefont
  {Navascu{\'e}s}, \citenamefont {Wang},\ and\ \citenamefont {Fan}}]{Li2022}%
  \BibitemOpen
  \bibfield  {author} {\bibinfo {author} {\bibfnamefont {Z.-D.}\ \bibnamefont
  {Li}}, \bibinfo {author} {\bibfnamefont {Y.-L.}\ \bibnamefont {Mao}},
  \bibinfo {author} {\bibfnamefont {M.}~\bibnamefont {Weilenmann}}, \bibinfo
  {author} {\bibfnamefont {A.}~\bibnamefont {Tavakoli}}, \bibinfo {author}
  {\bibfnamefont {H.}~\bibnamefont {Chen}}, \bibinfo {author} {\bibfnamefont
  {L.}~\bibnamefont {Feng}}, \bibinfo {author} {\bibfnamefont {S.-J.}\
  \bibnamefont {Yang}}, \bibinfo {author} {\bibfnamefont {M.-O.}\ \bibnamefont
  {Renou}}, \bibinfo {author} {\bibfnamefont {D.}~\bibnamefont {Trillo}},
  \bibinfo {author} {\bibfnamefont {T.~P.}\ \bibnamefont {Le}}, \bibinfo
  {author} {\bibfnamefont {N.}~\bibnamefont {Gisin}}, \bibinfo {author}
  {\bibfnamefont {A.}~\bibnamefont {Ac{\'{i}}n}}, \bibinfo {author}
  {\bibfnamefont {M.}~\bibnamefont {Navascu{\'e}s}}, \bibinfo {author}
  {\bibfnamefont {Z.}~\bibnamefont {Wang}},\ and\ \bibinfo {author}
  {\bibfnamefont {J.}~\bibnamefont {Fan}},\ }\href
  {https://doi.org/10.1103/PhysRevLett.128.040402} {\bibfield  {journal}
  {\bibinfo  {journal} {Phys. Rev. Lett.}\ }\textbf {\bibinfo {volume} {128}},\
  \bibinfo {pages} {040402} (\bibinfo {year} {2022})}\BibitemShut {NoStop}%
\bibitem [{\citenamefont {Chen}\ \emph {et~al.}(2022)\citenamefont {Chen},
  \citenamefont {Wang}, \citenamefont {Liu}, \citenamefont {Wang},
  \citenamefont {Ying}, \citenamefont {Shang}, \citenamefont {Wu},
  \citenamefont {Gong}, \citenamefont {Deng}, \citenamefont {Liang},
  \citenamefont {Zhang}, \citenamefont {Peng}, \citenamefont {Zhu},
  \citenamefont {Cabello}, \citenamefont {Lu},\ and\ \citenamefont
  {Pan}}]{Chen2022}%
  \BibitemOpen
  \bibfield  {author} {\bibinfo {author} {\bibfnamefont {M.-C.}\ \bibnamefont
  {Chen}}, \bibinfo {author} {\bibfnamefont {C.}~\bibnamefont {Wang}}, \bibinfo
  {author} {\bibfnamefont {F.-M.}\ \bibnamefont {Liu}}, \bibinfo {author}
  {\bibfnamefont {J.-W.}\ \bibnamefont {Wang}}, \bibinfo {author}
  {\bibfnamefont {C.}~\bibnamefont {Ying}}, \bibinfo {author} {\bibfnamefont
  {Z.-X.}\ \bibnamefont {Shang}}, \bibinfo {author} {\bibfnamefont
  {Y.}~\bibnamefont {Wu}}, \bibinfo {author} {\bibfnamefont {M.}~\bibnamefont
  {Gong}}, \bibinfo {author} {\bibfnamefont {H.}~\bibnamefont {Deng}}, \bibinfo
  {author} {\bibfnamefont {F.-T.}\ \bibnamefont {Liang}}, \bibinfo {author}
  {\bibfnamefont {Q.}~\bibnamefont {Zhang}}, \bibinfo {author} {\bibfnamefont
  {C.-Z.}\ \bibnamefont {Peng}}, \bibinfo {author} {\bibfnamefont
  {X.}~\bibnamefont {Zhu}}, \bibinfo {author} {\bibfnamefont {A.}~\bibnamefont
  {Cabello}}, \bibinfo {author} {\bibfnamefont {C.-Y.}\ \bibnamefont {Lu}},\
  and\ \bibinfo {author} {\bibfnamefont {J.-W.}\ \bibnamefont {Pan}},\ }\href
  {https://doi.org/10.1103/PhysRevLett.128.040403} {\bibfield  {journal}
  {\bibinfo  {journal} {Phys. Rev. Lett.}\ }\textbf {\bibinfo {volume} {128}},\
  \bibinfo {pages} {040403} (\bibinfo {year} {2022})}\BibitemShut {NoStop}%
\bibitem [{\citenamefont {Wu}\ \emph {et~al.}(2022)\citenamefont {Wu},
  \citenamefont {Jiang}, \citenamefont {Gu}, \citenamefont {Huang},
  \citenamefont {Bai}, \citenamefont {Sun}, \citenamefont {Zhang},
  \citenamefont {Gong}, \citenamefont {Mao}, \citenamefont {Zhong},
  \citenamefont {Chen}, \citenamefont {Zhang}, \citenamefont {Zhang},
  \citenamefont {Lu},\ and\ \citenamefont {Pan}}]{Wu2022}%
  \BibitemOpen
  \bibfield  {author} {\bibinfo {author} {\bibfnamefont {D.}~\bibnamefont
  {Wu}}, \bibinfo {author} {\bibfnamefont {Y.-F.}\ \bibnamefont {Jiang}},
  \bibinfo {author} {\bibfnamefont {X.-M.}\ \bibnamefont {Gu}}, \bibinfo
  {author} {\bibfnamefont {L.}~\bibnamefont {Huang}}, \bibinfo {author}
  {\bibfnamefont {B.}~\bibnamefont {Bai}}, \bibinfo {author} {\bibfnamefont
  {Q.-C.}\ \bibnamefont {Sun}}, \bibinfo {author} {\bibfnamefont
  {X.}~\bibnamefont {Zhang}}, \bibinfo {author} {\bibfnamefont {S.-Q.}\
  \bibnamefont {Gong}}, \bibinfo {author} {\bibfnamefont {Y.}~\bibnamefont
  {Mao}}, \bibinfo {author} {\bibfnamefont {H.-S.}\ \bibnamefont {Zhong}},
  \bibinfo {author} {\bibfnamefont {M.-C.}\ \bibnamefont {Chen}}, \bibinfo
  {author} {\bibfnamefont {J.}~\bibnamefont {Zhang}}, \bibinfo {author}
  {\bibfnamefont {Q.}~\bibnamefont {Zhang}}, \bibinfo {author} {\bibfnamefont
  {C.-Y.}\ \bibnamefont {Lu}},\ and\ \bibinfo {author} {\bibfnamefont {J.-W.}\
  \bibnamefont {Pan}},\ }\href {https://doi.org/10.1103/PhysRevLett.129.140401}
  {\bibfield  {journal} {\bibinfo  {journal} {Phys. Rev. Lett.}\ }\textbf
  {\bibinfo {volume} {129}},\ \bibinfo {pages} {140401} (\bibinfo {year}
  {2022})}\BibitemShut {NoStop}%
\bibitem [{\citenamefont {Barrios~Hita}\ \emph {et~al.}(2026)\citenamefont
  {Barrios~Hita}, \citenamefont {Trushechkin}, \citenamefont {Kampermann},
  \citenamefont {Epping},\ and\ \citenamefont {Bru{\ss}}}]{BarriosHita2026}%
  \BibitemOpen
  \bibfield  {author} {\bibinfo {author} {\bibfnamefont {P.}~\bibnamefont
  {Barrios~Hita}}, \bibinfo {author} {\bibfnamefont {A.}~\bibnamefont
  {Trushechkin}}, \bibinfo {author} {\bibfnamefont {H.}~\bibnamefont
  {Kampermann}}, \bibinfo {author} {\bibfnamefont {M.}~\bibnamefont {Epping}},\
  and\ \bibinfo {author} {\bibfnamefont {D.}~\bibnamefont {Bru{\ss}}},\ }\href
  {https://doi.org/10.1103/4k13-sdjh} {\bibfield  {journal} {\bibinfo
  {journal} {Phys. Rev. Lett.}\ }\textbf {\bibinfo {volume} {136}},\ \bibinfo
  {pages} {240202} (\bibinfo {year} {2026})}\BibitemShut {NoStop}%
\bibitem [{Note1()}]{Note1}%
  \BibitemOpen
  \bibinfo {note} {Further details, including a coordinate-free proof and the
  equivalence with the flag-kernel quotient, are given in Sec.~IV of the
  Supplemental Material.}\BibitemShut {Stop}%
\bibitem [{Note2()}]{Note2}%
  \BibitemOpen
  \bibinfo {note} {The general $N$-party canonical even/odd flag states used
  here are defined in Sec.~V of the Supplemental Material.}\BibitemShut {Stop}%
\bibitem [{\citenamefont {Hoffreumon}\ and\ \citenamefont
  {Woods}(2025)}]{HoffreumonWoods2025}%
  \BibitemOpen
  \bibfield  {author} {\bibinfo {author} {\bibfnamefont {T.}~\bibnamefont
  {Hoffreumon}}\ and\ \bibinfo {author} {\bibfnamefont {M.~P.}\ \bibnamefont
  {Woods}},\ }\href@noop {} {\bibinfo {title} {Quantum theory does not need
  complex numbers}} (\bibinfo {year} {2025}),\ \Eprint
  {https://arxiv.org/abs/2504.02808} {arXiv:2504.02808 [quant-ph]} \BibitemShut
  {NoStop}%
\bibitem [{\citenamefont {Hoffreumon}\ and\ \citenamefont
  {Woods}(2026)}]{HoffreumonWoods2026}%
  \BibitemOpen
  \bibfield  {author} {\bibinfo {author} {\bibfnamefont {T.}~\bibnamefont
  {Hoffreumon}}\ and\ \bibinfo {author} {\bibfnamefont {M.~P.}\ \bibnamefont
  {Woods}},\ }\href@noop {} {\bibinfo {title} {Quantum theory based on real
  numbers cannot be experimentally falsified}} (\bibinfo {year} {2026}),\
  \Eprint {https://arxiv.org/abs/2603.19208} {arXiv:2603.19208 [quant-ph]}
  \BibitemShut {NoStop}%
\bibitem [{\citenamefont {Kalarde}\ \emph {et~al.}(2026)\citenamefont
  {Kalarde}, \citenamefont {Xu},\ and\ \citenamefont
  {Renou}}]{KalardeXuRenou2026}%
  \BibitemOpen
  \bibfield  {author} {\bibinfo {author} {\bibfnamefont {F.~M.}\ \bibnamefont
  {Kalarde}}, \bibinfo {author} {\bibfnamefont {X.}~\bibnamefont {Xu}},\ and\
  \bibinfo {author} {\bibfnamefont {M.-O.}\ \bibnamefont {Renou}},\ }\href@noop
  {} {\bibinfo {title} {Comment on ``quantum theory based on real numbers
  cannot be experimentally falsified'': On the compatibility of physical
  principles with information theory for fermions}} (\bibinfo {year} {2026}),\
  \Eprint {https://arxiv.org/abs/2604.07425} {arXiv:2604.07425 [quant-ph]}
  \BibitemShut {NoStop}%
\bibitem [{\citenamefont {Feng}\ \emph {et~al.}(2025)\citenamefont {Feng},
  \citenamefont {Ren},\ and\ \citenamefont {Vedral}}]{FengRenVedral2025}%
  \BibitemOpen
  \bibfield  {author} {\bibinfo {author} {\bibfnamefont {T.}~\bibnamefont
  {Feng}}, \bibinfo {author} {\bibfnamefont {C.}~\bibnamefont {Ren}},\ and\
  \bibinfo {author} {\bibfnamefont {V.}~\bibnamefont {Vedral}},\ }\href@noop {}
  {\bibinfo {title} {Locality implies complex numbers in quantum mechanics}}
  (\bibinfo {year} {2025}),\ \Eprint {https://arxiv.org/abs/2504.07808}
  {arXiv:2504.07808 [quant-ph]} \BibitemShut {NoStop}%
\bibitem [{\citenamefont {Weilenmann}\ \emph {et~al.}(2025)\citenamefont
  {Weilenmann}, \citenamefont {Gisin},\ and\ \citenamefont
  {Sekatski}}]{WeilenmannGisinSekatski2025}%
  \BibitemOpen
  \bibfield  {author} {\bibinfo {author} {\bibfnamefont {M.}~\bibnamefont
  {Weilenmann}}, \bibinfo {author} {\bibfnamefont {N.}~\bibnamefont {Gisin}},\
  and\ \bibinfo {author} {\bibfnamefont {P.}~\bibnamefont {Sekatski}},\
  }\href@noop {} {\bibfield  {journal} {\bibinfo  {journal} {Phys. Rev. Lett.}\
  }\textbf {\bibinfo {volume} {135}},\ \bibinfo {pages} {180201} (\bibinfo
  {year} {2025})},\ \Eprint {https://arxiv.org/abs/2502.20102}
  {arXiv:2502.20102 [quant-ph]} \BibitemShut {NoStop}%
\bibitem [{\citenamefont {Volovich}(2025)}]{Volovich2025}%
  \BibitemOpen
  \bibfield  {author} {\bibinfo {author} {\bibfnamefont {I.}~\bibnamefont
  {Volovich}},\ }\href@noop {} {\bibinfo {title} {Real quantum mechanics in a
  k{\"a}hler space}} (\bibinfo {year} {2025}),\ \Eprint
  {https://arxiv.org/abs/2504.16838} {arXiv:2504.16838 [quant-ph]} \BibitemShut
  {NoStop}%
\bibitem [{\citenamefont {Maioli}\ \emph {et~al.}(2026)\citenamefont {Maioli},
  \citenamefont {Curado},\ and\ \citenamefont
  {Gazeau}}]{MaioliCuradoGazeau2026}%
  \BibitemOpen
  \bibfield  {author} {\bibinfo {author} {\bibfnamefont {A.~C.}\ \bibnamefont
  {Maioli}}, \bibinfo {author} {\bibfnamefont {E.~M.~F.}\ \bibnamefont
  {Curado}},\ and\ \bibinfo {author} {\bibfnamefont {J.-P.}\ \bibnamefont
  {Gazeau}},\ }\href@noop {} {\bibinfo {title} {Quantum mechanics over real
  numbers fully reproduces standard quantum theory}} (\bibinfo {year} {2026}),\
  \Eprint {https://arxiv.org/abs/2604.19482} {arXiv:2604.19482 [quant-ph]}
  \BibitemShut {NoStop}%
\bibitem [{\citenamefont {Moretti}\ and\ \citenamefont
  {Oppio}(2017)}]{MorettiOppio2017}%
  \BibitemOpen
  \bibfield  {author} {\bibinfo {author} {\bibfnamefont {V.}~\bibnamefont
  {Moretti}}\ and\ \bibinfo {author} {\bibfnamefont {M.}~\bibnamefont
  {Oppio}},\ }\href@noop {} {\bibfield  {journal} {\bibinfo  {journal} {Rev.
  Math. Phys.}\ }\textbf {\bibinfo {volume} {29}},\ \bibinfo {pages} {1750021}
  (\bibinfo {year} {2017})},\ \Eprint {https://arxiv.org/abs/1611.09029}
  {arXiv:1611.09029 [math-ph]} \BibitemShut {NoStop}%
\end{thebibliography}

\begin{thebibliography}{19}%
\makeatletter
\providecommand \@ifxundefined [1]{%
 \@ifx{#1\undefined}
}%
\providecommand \@ifnum [1]{%
 \ifnum #1\expandafter \@firstoftwo
 \else \expandafter \@secondoftwo
 \fi
}%
\providecommand \@ifx [1]{%
 \ifx #1\expandafter \@firstoftwo
 \else \expandafter \@secondoftwo
 \fi
}%
\providecommand \natexlab [1]{#1}%
\providecommand \enquote  [1]{``#1''}%
\providecommand \bibnamefont  [1]{#1}%
\providecommand \bibfnamefont [1]{#1}%
\providecommand \citenamefont [1]{#1}%
\providecommand \href@noop [0]{\@secondoftwo}%
\providecommand \href [0]{\begingroup \@sanitize@url \@href}%
\providecommand \@href[1]{\@@startlink{#1}\@@href}%
\providecommand \@@href[1]{\endgroup#1\@@endlink}%
\providecommand \@sanitize@url [0]{\catcode `\\12\catcode `\$12\catcode
  `\&12\catcode `\#12\catcode `\^12\catcode `\_12\catcode `\%12\relax}%
\providecommand \@@startlink[1]{}%
\providecommand \@@endlink[0]{}%
\providecommand \url  [0]{\begingroup\@sanitize@url \@url }%
\providecommand \@url [1]{\endgroup\@href {#1}{\urlprefix }}%
\providecommand \urlprefix  [0]{URL }%
\providecommand \Eprint [0]{\href }%
\providecommand \doibase [0]{https://doi.org/}%
\providecommand \selectlanguage [0]{\@gobble}%
\providecommand \bibinfo  [0]{\@secondoftwo}%
\providecommand \bibfield  [0]{\@secondoftwo}%
\providecommand \translation [1]{[#1]}%
\providecommand \BibitemOpen [0]{}%
\providecommand \bibitemStop [0]{}%
\providecommand \bibitemNoStop [0]{.\EOS\space}%
\providecommand \EOS [0]{\spacefactor3000\relax}%
\providecommand \BibitemShut  [1]{\csname bibitem#1\endcsname}%
\let\auto@bib@innerbib\@empty
\bibitem [{\citenamefont {Barrios~Hita}\ \emph {et~al.}(2026)\citenamefont
  {Barrios~Hita}, \citenamefont {Trushechkin}, \citenamefont {Kampermann},
  \citenamefont {Epping},\ and\ \citenamefont {Bru{\ss}}}]{SM-BarriosHita2026}%
  \BibitemOpen
  \bibfield  {author} {\bibinfo {author} {\bibfnamefont {P.}~\bibnamefont
  {Barrios~Hita}}, \bibinfo {author} {\bibfnamefont {A.}~\bibnamefont
  {Trushechkin}}, \bibinfo {author} {\bibfnamefont {H.}~\bibnamefont
  {Kampermann}}, \bibinfo {author} {\bibfnamefont {M.}~\bibnamefont {Epping}},\
  and\ \bibinfo {author} {\bibfnamefont {D.}~\bibnamefont {Bru{\ss}}},\
  }\href@noop {} {\bibfield  {journal} {\bibinfo  {journal} {Phys. Rev. Lett.}\
  }\textbf {\bibinfo {volume} {136}},\ \bibinfo {pages} {240202} (\bibinfo
  {year} {2026})}\BibitemShut {NoStop}%
\bibitem [{\citenamefont {Renou}\ \emph {et~al.}(2021)\citenamefont {Renou},
  \citenamefont {Trillo}, \citenamefont {Weilenmann}, \citenamefont {Le},
  \citenamefont {Tavakoli}, \citenamefont {Gisin}, \citenamefont {Ac{\'i}n},\
  and\ \citenamefont {Navascu{\'e}s}}]{SM-Renou2021}%
  \BibitemOpen
  \bibfield  {author} {\bibinfo {author} {\bibfnamefont {M.-O.}\ \bibnamefont
  {Renou}}, \bibinfo {author} {\bibfnamefont {D.}~\bibnamefont {Trillo}},
  \bibinfo {author} {\bibfnamefont {M.}~\bibnamefont {Weilenmann}}, \bibinfo
  {author} {\bibfnamefont {T.~P.}\ \bibnamefont {Le}}, \bibinfo {author}
  {\bibfnamefont {A.}~\bibnamefont {Tavakoli}}, \bibinfo {author}
  {\bibfnamefont {N.}~\bibnamefont {Gisin}}, \bibinfo {author} {\bibfnamefont
  {A.}~\bibnamefont {Ac{\'i}n}},\ and\ \bibinfo {author} {\bibfnamefont
  {M.}~\bibnamefont {Navascu{\'e}s}},\ }\href@noop {} {\bibfield  {journal}
  {\bibinfo  {journal} {Nature}\ }\textbf {\bibinfo {volume} {600}},\ \bibinfo
  {pages} {625} (\bibinfo {year} {2021})}\BibitemShut {NoStop}%
\bibitem [{\citenamefont {Stueckelberg}(1960)}]{SM-Stueckelberg1960}%
  \BibitemOpen
  \bibfield  {author} {\bibinfo {author} {\bibfnamefont {E.~C.~G.}\
  \bibnamefont {Stueckelberg}},\ }\href@noop {} {\bibfield  {journal} {\bibinfo
   {journal} {Helv. Phys. Acta}\ }\textbf {\bibinfo {volume} {33}},\ \bibinfo
  {pages} {727} (\bibinfo {year} {1960})}\BibitemShut {NoStop}%
\bibitem [{\citenamefont {Myrheim}(1999)}]{SM-Myrheim1999}%
  \BibitemOpen
  \bibfield  {author} {\bibinfo {author} {\bibfnamefont {J.}~\bibnamefont
  {Myrheim}},\ }\href@noop {} {\bibinfo {title} {Quantum mechanics on a real
  hilbert space}} (\bibinfo {year} {1999}),\ \Eprint
  {https://arxiv.org/abs/quant-ph/9905037} {arXiv:quant-ph/9905037}
  \BibitemShut {NoStop}%
\bibitem [{\citenamefont {McKague}\ \emph {et~al.}(2009)\citenamefont
  {McKague}, \citenamefont {Mosca},\ and\ \citenamefont
  {Gisin}}]{SM-McKague2009}%
  \BibitemOpen
  \bibfield  {author} {\bibinfo {author} {\bibfnamefont {M.}~\bibnamefont
  {McKague}}, \bibinfo {author} {\bibfnamefont {M.}~\bibnamefont {Mosca}},\
  and\ \bibinfo {author} {\bibfnamefont {N.}~\bibnamefont {Gisin}},\
  }\href@noop {} {\bibfield  {journal} {\bibinfo  {journal} {Phys. Rev. Lett.}\
  }\textbf {\bibinfo {volume} {102}},\ \bibinfo {pages} {020505} (\bibinfo
  {year} {2009})}\BibitemShut {NoStop}%
\bibitem [{\citenamefont {Wootters}(2012)}]{SM-Wootters2012}%
  \BibitemOpen
  \bibfield  {author} {\bibinfo {author} {\bibfnamefont {W.~K.}\ \bibnamefont
  {Wootters}},\ }\href@noop {} {\bibfield  {journal} {\bibinfo  {journal}
  {Found. Phys.}\ }\textbf {\bibinfo {volume} {42}},\ \bibinfo {pages} {19}
  (\bibinfo {year} {2012})}\BibitemShut {NoStop}%
\bibitem [{\citenamefont {Aleksandrova}\ \emph {et~al.}(2013)\citenamefont
  {Aleksandrova}, \citenamefont {Borish},\ and\ \citenamefont
  {Wootters}}]{SM-Aleksandrova2013}%
  \BibitemOpen
  \bibfield  {author} {\bibinfo {author} {\bibfnamefont {A.}~\bibnamefont
  {Aleksandrova}}, \bibinfo {author} {\bibfnamefont {V.}~\bibnamefont
  {Borish}},\ and\ \bibinfo {author} {\bibfnamefont {W.~K.}\ \bibnamefont
  {Wootters}},\ }\href@noop {} {\bibfield  {journal} {\bibinfo  {journal}
  {Phys. Rev. A}\ }\textbf {\bibinfo {volume} {87}},\ \bibinfo {pages} {052106}
  (\bibinfo {year} {2013})}\BibitemShut {NoStop}%
\bibitem [{\citenamefont {Li}\ \emph {et~al.}(2022)\citenamefont {Li},
  \citenamefont {Mao}, \citenamefont {Weilenmann}, \citenamefont {Tavakoli},
  \citenamefont {Chen}, \citenamefont {Feng}, \citenamefont {Yang},
  \citenamefont {Renou}, \citenamefont {Trillo}, \citenamefont {Le},
  \citenamefont {Gisin}, \citenamefont {Ac{\'i}n}, \citenamefont
  {Navascu{\'e}s}, \citenamefont {Wang},\ and\ \citenamefont
  {Fan}}]{SM-Li2022}%
  \BibitemOpen
  \bibfield  {author} {\bibinfo {author} {\bibfnamefont {Z.-D.}\ \bibnamefont
  {Li}}, \bibinfo {author} {\bibfnamefont {Y.-L.}\ \bibnamefont {Mao}},
  \bibinfo {author} {\bibfnamefont {M.}~\bibnamefont {Weilenmann}}, \bibinfo
  {author} {\bibfnamefont {A.}~\bibnamefont {Tavakoli}}, \bibinfo {author}
  {\bibfnamefont {H.}~\bibnamefont {Chen}}, \bibinfo {author} {\bibfnamefont
  {L.}~\bibnamefont {Feng}}, \bibinfo {author} {\bibfnamefont {S.-J.}\
  \bibnamefont {Yang}}, \bibinfo {author} {\bibfnamefont {M.-O.}\ \bibnamefont
  {Renou}}, \bibinfo {author} {\bibfnamefont {D.}~\bibnamefont {Trillo}},
  \bibinfo {author} {\bibfnamefont {T.~P.}\ \bibnamefont {Le}}, \bibinfo
  {author} {\bibfnamefont {N.}~\bibnamefont {Gisin}}, \bibinfo {author}
  {\bibfnamefont {A.}~\bibnamefont {Ac{\'i}n}}, \bibinfo {author}
  {\bibfnamefont {M.}~\bibnamefont {Navascu{\'e}s}}, \bibinfo {author}
  {\bibfnamefont {Z.}~\bibnamefont {Wang}},\ and\ \bibinfo {author}
  {\bibfnamefont {J.}~\bibnamefont {Fan}},\ }\href@noop {} {\bibfield
  {journal} {\bibinfo  {journal} {Phys. Rev. Lett.}\ }\textbf {\bibinfo
  {volume} {128}},\ \bibinfo {pages} {040402} (\bibinfo {year}
  {2022})}\BibitemShut {NoStop}%
\bibitem [{\citenamefont {Chen}\ \emph {et~al.}(2022)\citenamefont {Chen},
  \citenamefont {Wang}, \citenamefont {Liu}, \citenamefont {Wang},
  \citenamefont {Ying}, \citenamefont {Shang}, \citenamefont {Wu},
  \citenamefont {Gong}, \citenamefont {Deng}, \citenamefont {Liang},
  \citenamefont {Zhang}, \citenamefont {Peng}, \citenamefont {Zhu},
  \citenamefont {Cabello}, \citenamefont {Lu},\ and\ \citenamefont
  {Pan}}]{SM-Chen2022}%
  \BibitemOpen
  \bibfield  {author} {\bibinfo {author} {\bibfnamefont {M.-C.}\ \bibnamefont
  {Chen}}, \bibinfo {author} {\bibfnamefont {C.}~\bibnamefont {Wang}}, \bibinfo
  {author} {\bibfnamefont {F.-M.}\ \bibnamefont {Liu}}, \bibinfo {author}
  {\bibfnamefont {J.-W.}\ \bibnamefont {Wang}}, \bibinfo {author}
  {\bibfnamefont {C.}~\bibnamefont {Ying}}, \bibinfo {author} {\bibfnamefont
  {Z.-X.}\ \bibnamefont {Shang}}, \bibinfo {author} {\bibfnamefont
  {Y.}~\bibnamefont {Wu}}, \bibinfo {author} {\bibfnamefont {M.}~\bibnamefont
  {Gong}}, \bibinfo {author} {\bibfnamefont {H.}~\bibnamefont {Deng}}, \bibinfo
  {author} {\bibfnamefont {F.-T.}\ \bibnamefont {Liang}}, \bibinfo {author}
  {\bibfnamefont {Q.}~\bibnamefont {Zhang}}, \bibinfo {author} {\bibfnamefont
  {C.-Z.}\ \bibnamefont {Peng}}, \bibinfo {author} {\bibfnamefont
  {X.}~\bibnamefont {Zhu}}, \bibinfo {author} {\bibfnamefont {A.}~\bibnamefont
  {Cabello}}, \bibinfo {author} {\bibfnamefont {C.-Y.}\ \bibnamefont {Lu}},\
  and\ \bibinfo {author} {\bibfnamefont {J.-W.}\ \bibnamefont {Pan}},\
  }\href@noop {} {\bibfield  {journal} {\bibinfo  {journal} {Phys. Rev. Lett.}\
  }\textbf {\bibinfo {volume} {128}},\ \bibinfo {pages} {040403} (\bibinfo
  {year} {2022})}\BibitemShut {NoStop}%
\bibitem [{\citenamefont {Wu}\ \emph {et~al.}(2022)\citenamefont {Wu},
  \citenamefont {Jiang}, \citenamefont {Gu}, \citenamefont {Huang},
  \citenamefont {Bai}, \citenamefont {Sun}, \citenamefont {Zhang},
  \citenamefont {Gong}, \citenamefont {Mao}, \citenamefont {Zhong},
  \citenamefont {Chen}, \citenamefont {Zhang}, \citenamefont {Zhang},
  \citenamefont {Lu},\ and\ \citenamefont {Pan}}]{SM-Wu2022}%
  \BibitemOpen
  \bibfield  {author} {\bibinfo {author} {\bibfnamefont {D.}~\bibnamefont
  {Wu}}, \bibinfo {author} {\bibfnamefont {Y.-F.}\ \bibnamefont {Jiang}},
  \bibinfo {author} {\bibfnamefont {X.-M.}\ \bibnamefont {Gu}}, \bibinfo
  {author} {\bibfnamefont {L.}~\bibnamefont {Huang}}, \bibinfo {author}
  {\bibfnamefont {B.}~\bibnamefont {Bai}}, \bibinfo {author} {\bibfnamefont
  {Q.-C.}\ \bibnamefont {Sun}}, \bibinfo {author} {\bibfnamefont
  {X.}~\bibnamefont {Zhang}}, \bibinfo {author} {\bibfnamefont {S.-Q.}\
  \bibnamefont {Gong}}, \bibinfo {author} {\bibfnamefont {Y.}~\bibnamefont
  {Mao}}, \bibinfo {author} {\bibfnamefont {H.-S.}\ \bibnamefont {Zhong}},
  \bibinfo {author} {\bibfnamefont {M.-C.}\ \bibnamefont {Chen}}, \bibinfo
  {author} {\bibfnamefont {J.}~\bibnamefont {Zhang}}, \bibinfo {author}
  {\bibfnamefont {Q.}~\bibnamefont {Zhang}}, \bibinfo {author} {\bibfnamefont
  {C.-Y.}\ \bibnamefont {Lu}},\ and\ \bibinfo {author} {\bibfnamefont {J.-W.}\
  \bibnamefont {Pan}},\ }\href@noop {} {\bibfield  {journal} {\bibinfo
  {journal} {Phys. Rev. Lett.}\ }\textbf {\bibinfo {volume} {129}},\ \bibinfo
  {pages} {140401} (\bibinfo {year} {2022})}\BibitemShut {NoStop}%
\bibitem [{\citenamefont {Ying}\ \emph {et~al.}(2025)\citenamefont {Ying},
  \citenamefont {Alanon}, \citenamefont {Centeno}, \citenamefont {Surace},
  \citenamefont {Ansanelli}, \citenamefont {Liu}, \citenamefont {Schmid},\ and\
  \citenamefont {Spekkens}}]{SM-YingEtAl2025}%
  \BibitemOpen
  \bibfield  {author} {\bibinfo {author} {\bibfnamefont {Y.}~\bibnamefont
  {Ying}}, \bibinfo {author} {\bibfnamefont {M.~C.}\ \bibnamefont {Alanon}},
  \bibinfo {author} {\bibfnamefont {D.}~\bibnamefont {Centeno}}, \bibinfo
  {author} {\bibfnamefont {J.}~\bibnamefont {Surace}}, \bibinfo {author}
  {\bibfnamefont {M.~M.}\ \bibnamefont {Ansanelli}}, \bibinfo {author}
  {\bibfnamefont {R.}~\bibnamefont {Liu}}, \bibinfo {author} {\bibfnamefont
  {D.}~\bibnamefont {Schmid}},\ and\ \bibinfo {author} {\bibfnamefont {R.~W.}\
  \bibnamefont {Spekkens}},\ }\href@noop {} {\bibinfo {title} {On whether
  quantum theory needs complex numbers: The foil theories perspective}}
  (\bibinfo {year} {2025}),\ \Eprint {https://arxiv.org/abs/2506.08091}
  {arXiv:2506.08091 [quant-ph]} \BibitemShut {NoStop}%
\bibitem [{\citenamefont {Hoffreumon}\ and\ \citenamefont
  {Woods}(2025)}]{SM-HoffreumonWoods2025}%
  \BibitemOpen
  \bibfield  {author} {\bibinfo {author} {\bibfnamefont {T.}~\bibnamefont
  {Hoffreumon}}\ and\ \bibinfo {author} {\bibfnamefont {M.~P.}\ \bibnamefont
  {Woods}},\ }\href@noop {} {\bibinfo {title} {Quantum theory does not need
  complex numbers}} (\bibinfo {year} {2025}),\ \Eprint
  {https://arxiv.org/abs/2504.02808} {arXiv:2504.02808 [quant-ph]} \BibitemShut
  {NoStop}%
\bibitem [{\citenamefont {Hoffreumon}\ and\ \citenamefont
  {Woods}(2026)}]{SM-HoffreumonWoods2026}%
  \BibitemOpen
  \bibfield  {author} {\bibinfo {author} {\bibfnamefont {T.}~\bibnamefont
  {Hoffreumon}}\ and\ \bibinfo {author} {\bibfnamefont {M.~P.}\ \bibnamefont
  {Woods}},\ }\href@noop {} {\bibinfo {title} {Quantum theory based on real
  numbers cannot be experimentally falsified}} (\bibinfo {year} {2026}),\
  \Eprint {https://arxiv.org/abs/2603.19208} {arXiv:2603.19208 [quant-ph]}
  \BibitemShut {NoStop}%
\bibitem [{\citenamefont {Kalarde}\ \emph {et~al.}(2026)\citenamefont
  {Kalarde}, \citenamefont {Xu},\ and\ \citenamefont
  {Renou}}]{SM-KalardeXuRenou2026}%
  \BibitemOpen
  \bibfield  {author} {\bibinfo {author} {\bibfnamefont {F.~M.}\ \bibnamefont
  {Kalarde}}, \bibinfo {author} {\bibfnamefont {X.}~\bibnamefont {Xu}},\ and\
  \bibinfo {author} {\bibfnamefont {M.-O.}\ \bibnamefont {Renou}},\ }\href@noop
  {} {\bibinfo {title} {Comment on ``quantum theory based on real numbers
  cannot be experimentally falsified'': On the compatibility of physical
  principles with information theory for fermions}} (\bibinfo {year} {2026}),\
  \Eprint {https://arxiv.org/abs/2604.07425} {arXiv:2604.07425 [quant-ph]}
  \BibitemShut {NoStop}%
\bibitem [{\citenamefont {Feng}\ \emph {et~al.}(2025)\citenamefont {Feng},
  \citenamefont {Ren},\ and\ \citenamefont {Vedral}}]{SM-FengRenVedral2025}%
  \BibitemOpen
  \bibfield  {author} {\bibinfo {author} {\bibfnamefont {T.}~\bibnamefont
  {Feng}}, \bibinfo {author} {\bibfnamefont {C.}~\bibnamefont {Ren}},\ and\
  \bibinfo {author} {\bibfnamefont {V.}~\bibnamefont {Vedral}},\ }\href@noop {}
  {\bibinfo {title} {Locality implies complex numbers in quantum mechanics}}
  (\bibinfo {year} {2025}),\ \Eprint {https://arxiv.org/abs/2504.07808}
  {arXiv:2504.07808 [quant-ph]} \BibitemShut {NoStop}%
\bibitem [{\citenamefont {Weilenmann}\ \emph {et~al.}(2025)\citenamefont
  {Weilenmann}, \citenamefont {Gisin},\ and\ \citenamefont
  {Sekatski}}]{SM-WeilenmannGisinSekatski2025}%
  \BibitemOpen
  \bibfield  {author} {\bibinfo {author} {\bibfnamefont {M.}~\bibnamefont
  {Weilenmann}}, \bibinfo {author} {\bibfnamefont {N.}~\bibnamefont {Gisin}},\
  and\ \bibinfo {author} {\bibfnamefont {P.}~\bibnamefont {Sekatski}},\
  }\href@noop {} {\bibfield  {journal} {\bibinfo  {journal} {Phys. Rev. Lett.}\
  }\textbf {\bibinfo {volume} {135}},\ \bibinfo {pages} {180201} (\bibinfo
  {year} {2025})},\ \Eprint {https://arxiv.org/abs/2502.20102}
  {arXiv:2502.20102 [quant-ph]} \BibitemShut {NoStop}%
\bibitem [{\citenamefont {Volovich}(2025)}]{SM-Volovich2025}%
  \BibitemOpen
  \bibfield  {author} {\bibinfo {author} {\bibfnamefont {I.}~\bibnamefont
  {Volovich}},\ }\href@noop {} {\bibinfo {title} {Real quantum mechanics in a
  {Kahler} space}} (\bibinfo {year} {2025}),\ \Eprint
  {https://arxiv.org/abs/2504.16838} {arXiv:2504.16838 [quant-ph]} \BibitemShut
  {NoStop}%
\bibitem [{\citenamefont {Maioli}\ \emph {et~al.}(2026)\citenamefont {Maioli},
  \citenamefont {Curado},\ and\ \citenamefont
  {Gazeau}}]{SM-MaioliCuradoGazeau2026}%
  \BibitemOpen
  \bibfield  {author} {\bibinfo {author} {\bibfnamefont {A.~C.}\ \bibnamefont
  {Maioli}}, \bibinfo {author} {\bibfnamefont {E.~M.~F.}\ \bibnamefont
  {Curado}},\ and\ \bibinfo {author} {\bibfnamefont {J.-P.}\ \bibnamefont
  {Gazeau}},\ }\href@noop {} {\bibinfo {title} {Quantum mechanics over real
  numbers fully reproduces standard quantum theory}} (\bibinfo {year} {2026}),\
  \Eprint {https://arxiv.org/abs/2604.19482} {arXiv:2604.19482 [quant-ph]}
  \BibitemShut {NoStop}%
\bibitem [{\citenamefont {Moretti}\ and\ \citenamefont
  {Oppio}(2017)}]{SM-MorettiOppio2017}%
  \BibitemOpen
  \bibfield  {author} {\bibinfo {author} {\bibfnamefont {V.}~\bibnamefont
  {Moretti}}\ and\ \bibinfo {author} {\bibfnamefont {M.}~\bibnamefont
  {Oppio}},\ }\href@noop {} {\bibfield  {journal} {\bibinfo  {journal} {Rev.
  Math. Phys.}\ }\textbf {\bibinfo {volume} {29}},\ \bibinfo {pages} {1750021}
  (\bibinfo {year} {2017})},\ \Eprint {https://arxiv.org/abs/1611.09029}
  {arXiv:1611.09029 [math-ph]} \BibitemShut {NoStop}%
\end{thebibliography}
\end{document}